\journal{European Journal of Operational Research}
\newtheorem{assumption}{Assumption}
\newtheorem{definition}{Definition}
\newtheorem{proposition}{Proposition}
\begin{document}
	
	\begin{frontmatter}
		
		
		
		\title{Covariance Matrix Estimation for Positively Correlated Assets}
		
	\author[1]{Weilong Liu} 
	\ead{liuwlong7@mail.sysu.edu.cn}
	
	\author[1]{Yanchu Liu\corref{cor1}}
	\ead{liuych26@mail.sysu.edu.cn}

	\cortext[cor1]{Corresponding author.}
	
	\affiliation[1]{organization={Lingnan College, Sun Yat-sen University}, 
		addressline={},
		city={Guangzhou},
		postcode={510275},
		state={},
		country={China}}

		\begin{abstract}
		The comovement phenomenon in financial markets creates decision scenarios with positively correlated asset returns. This paper addresses covariance matrix estimation under such conditions, motivated by observations of significant positive correlations in factor-sorted portfolio monthly returns. We demonstrate that fine-tuning eigenvectors linked to weak factors within rotation-equivariant frameworks produces well-conditioned covariance matrix estimates. Our Eigenvector Rotation Shrinkage Estimator (ERSE) pairwise rotates eigenvectors while preserving orthogonality, equivalent to performing multiple linear shrinkage on two distinct eigenvalues. Empirical results on factor-sorted portfolios from the Ken French data library demonstrate that ERSE outperforms existing rotation-equivariant estimators in reducing out-of-sample portfolio variance, achieving average risk reductions of 10.52\% versus linear shrinkage methods and 12.46\% versus nonlinear shrinkage methods. Further checks indicate that ERSE yields covariance matrices with lower condition numbers, produces more concentrated and stable portfolio weights, and provides consistent improvements across different subperiods and estimation windows.
		\end{abstract}
		
		
		
		\begin{keyword}
			
	Finance \sep Covariance Matrix Estimation \sep Positively Correlated Assets \sep Shrinkage Estimator \sep Paired Eigenvector Rotation
		\end{keyword}
		
	\end{frontmatter}
	
	\section{Introduction}
	The widespread phenomenon of comovement in financial markets, where the returns of multiple assets tend to move in the same direction, gives rise to various decision scenarios with positively correlated assets. This phenomenon typically occurs during market-wide events, such as economic shifts or global crises, as common factors influence all candidate assets. Researchers identify significant common factors in the returns of small-cap stocks, value stocks, closed-end funds, industry-specific stocks, and bonds with the same rating and maturity, revealing numerous patterns of comovement in asset returns \citep{BARBERIS2005}. Subsequent studies also find strong comovement signals in various contexts, such as stocks of firms headquartered in the same geographic area \citep{PIRINSKY2006}, stocks with similar prices \citep{Green2009}, and stocks within heavily regulated industries \citep{BLAU2023}. Researchers also demonstrate that high-frequency trading significantly enhances the comovement phenomenon among assets \citep{MALCENIECE2019}. As a prominent example of computational advances, high-frequency trading is expected to become increasingly prevalent, making decision scenarios characterized by positively correlated assets more common. This paper addresses the challenge of covariance matrix estimation for such scenarios.
	
	An additional motivation for this study arises from our observation of persistently high positive correlations among assets within factor-sorted portfolios. As shown in Section \ref{PosCor}, we test 19 monthly datasets from the Ken-French website, which are based on diversified factors and encompass the most commonly studied factors available on the site. Our empirical analysis reveals that the average sample correlation coefficient significantly exceeds 0.5, while even the lowest pairwise correlation is positive, with these results consistently holding over time. Based on this observation, we introduce a positive correlation assumption for such decision scenarios: all elements of the sample correlation matrix are positive. This assumption may seem overly stringent, as it does not imply an absolute positive correlation between any pair of individual stocks. However, it holds in specific decision scenarios due to the comovement phenomenon, such as in the factor-sorted portfolios discussed above. This paper is developed to propose a customized covariance matrix estimator under the positive correlation assumption.
	
	
	Covariance matrix estimation is crucial for financial decision-making tasks such as asset pricing, risk management, and portfolio optimization. Jensen's Inequality suggests that the largest eigenvalue of the sample covariance matrix is biased upwards, while the smallest eigenvalue is biased downwards, indicating that the sample covariance matrix is prone to being ill-conditioned \citep{Vaart1961, Ledoit2004}. In Principal Component Analysis (PCA), the eigenvectors denote the principal components, which capture the directions that account for the maximum variance in the data, while the eigenvalues quantify the variance each component explains. \cite{Fan2013} introduces the pervasiveness (strong factor) assumption, which posits that the first $K$ largest eigenvalues of the data covariance matrix diverge while the remaining eigenvalues remain bounded. This results in a pronounced gap between the $K$-th and the ($K+1$)-th largest eigenvalues, a property known as the strong factor structure. Several subsequent studies adopt this strong factor assumption \citep{DAI2019,KIM2019,Ding2021}. \cite{Dai2024} further demonstrates that the sample covariance matrix of S\&P 500 stocks exhibits a single dominant factor, with the largest eigenvalue significantly outweighing the others. The financial interpretation of the rotation-equivariant method is to repackage the $n$ individual assets into an equal number of funds that span the space of attainable investment opportunities \citep{Ledoit2017}. The eigenvalues represent the variances of the returns of these $n$ funds, and underestimating the small eigenvalues close to zero can lead to a significant underestimation of the associated risk.
	
	We reconsider the above issue for positively correlated assets, focusing on the correlation matrix as an alternative to the covariance matrix in the normalized setting. Consistent with \cite{Dai2024}, our results show that the sample correlation matrix has a dominant eigenvalue that significantly influences the spectrum, while the remaining eigenvalue is much smaller. Additionally, we observe a misalignment in the distribution of the sample eigenvectors. According to the Perron-Frobenius theorem, we find that the eigenvector corresponding to the largest eigenvalue (i.e., the dominant eigenvector) tends to align closely with a uniform vector, while the remaining eigenvectors, orthogonal to it, approach the null space of the uniform vector. Our theoretical findings indicate that the deviation of the dominant eigenvector from the null space is significantly larger than that of the other eigenvectors, with this discrepancy being more pronounced than the differences in the corresponding eigenvalues. Moreover, we observe that the eigenvectors associated with weak factors exhibit minor deviations, and moderately increasing these deviations in the rotation-equivariant estimation framework can push up the corresponding eigenvalues.
	
	A consensus view holds that reducing the dispersion among the sample eigenvalues can provide a well-conditioned covariance matrix estimator in high-dimensional asset settings \citep{Ledoit2012,Shi2020,Bodnar2018}. However, determining the appropriate shrinkage coefficients remains challenging, and various methods have been developed to address this challenge \citep{Ledoit2004, Ledoit2012, Barroso2022}. Within the rotation-equivariant framework, \cite{Ledoit2017, Ledoit2022} provide an oracle estimator that relies on the true covariance matrix \(\Sigma\), where the estimated eigenvalue is constructed as $\hat{\lambda}_{i} = \boldsymbol{q}_{i}^T \Sigma \boldsymbol{q}_{i}$ with its corresponding sample eigenvector \(\boldsymbol{q}_{i}\). In practical applications, the unobservable true covariance matrix \(\Sigma\) is replaced by an available target matrix \citep{Ledoit2004a, Ledoit2003, Touloumis2015}. Unlike the aforementioned estimators, which focus on target matrices, this paper corrects the eigenvector $\boldsymbol{q}_{i}$ while using the sample covariance matrix as the target matrix. Our correction of the eigenvectors is also motivated by the conclusion in \cite{DeMiguel2009}. The financial interpretation of the eigenvector is the portfolio vector of the corresponding repackaged funds. We prove that a vector's deviation from the uniform vector's null space is negatively correlated with the \( \ell_2 \)-norm of its corresponding unit-cost portfolio. As discussed in \cite{DeMiguel2009}, introducing the \( \ell_2 \)-norm constraint helps mitigate estimation errors in the covariance matrix. Therefore, a natural approach is to develop a shrinkage estimator for the sample covariance matrix by correcting the misalignment of its eigenvectors.
	
	We propose a new rotation-equivariant covariance matrix estimator, the Eigenvector Rotation Shrinkage Estimator (ERSE), designed for positively correlated assets. ERSE constrains each eigenvector’s deviation from the null space of the uniform vector to exceed a certain threshold. To reduce the deviation between paired eigenvectors while preserving their orthogonality, we introduce the Paired Eigenvector Rotation (PER) technique. The ERSE method is implemented by iteratively applying the PER technique until all eigenvectors satisfy the deviation requirements. Typically, sample eigenvectors associated with weak factors exhibit minor deviations, prompting the ERSE method to apply rotation to these eigenvectors. This rotation process generally amplifies weak eigenvalues while diminishing the corresponding strong eigenvalues. We demonstrate that the ERSE method is equivalent to repeatedly applying linear shrinkage to the two differentiated eigenvalues, narrowing their gap while maintaining their sum. Furthermore, the computational complexity of ERSE is manageable, as the PER technique can be executed in a few steps and requires no more than $n-1$ iterations. Experimental results on factor-sorted portfolios, in which each pair of assets exhibits high positive correlation, demonstrate that the ERSE method outperforms alternative estimators in reducing out-of-sample risk. Additional robustness analyses, including subperiod evaluations, varying estimation windows, and random asset sampling, further confirm the advantages of the ERSE method.
	
	This paper contributes to the literature on Global Minimum-Variance (GMV) portfolio models. The seminal work of \cite{Markowitz1952} remains the cornerstone of both theoretical research and practical applications in portfolio management. Existing literature highlights that estimation errors in the mean vector tend to have a significantly more detrimental effect on out-of-sample portfolio performance than errors in the covariance matrix, leading many scholars to advocate for the GMV portfolio model, which focuses solely on minimizing risk without relying on expected returns. Empirical evidence from \cite{DeMiguel2009} shows that the GMV portfolio often exhibits superior out-of-sample Sharpe ratios and lower turnover rates than other mean-variance models. \cite{Fan2012} suggests considering the expected return of all assets as zero, positioning the GMV portfolio as the sole efficient portfolio within the mean-variance framework. \cite{Cai2020} discusses the challenges posed by microstructural noise in high-frequency asset returns and advocates for a GMV-based approach to constructing trading strategies. Recent studies also suggest directly predicting the optimal weights of the GMV portfolio \citep{Ding2021,Bodnar2017,Reh2023}. This paper addresses the problem of estimating the covariance matrix for positively correlated assets, thereby enhancing decision support for constructing GMV portfolios in such environments.
	
	Our work is related to the literature on factor-based covariance estimation methods. \cite{Conlon2025} conduct a comprehensive comparative analysis of a wide range of factor models and the factor-based framework for constructing the covariance matrix can enhance minimum-variance portfolio performance. \cite{Fan2008} propose covariance estimators with observable factors, while \cite{Fan2013} extends this framework to unobservable factors and introduces the well-known Principal Orthogonal Complement Thresholding (POET) estimator. Several modifications of POET have been proposed. For example, \cite{Fan2018} adapts the method for heavy-tailed data and elliptical factor models, \cite{Wang2021} extends POET to a kernel-based local estimation technique for large dynamic covariance matrices using index variables, and \cite{Dai2024} extends POET to estimate large covariance matrices with a mixed structure of observable and unobservable strong and weak factors. Broadly speaking, the POET method performs PCA on the sample covariance matrix, extracting the top $K$ principal components as the signal and applying thresholding to the remaining noise, resulting in a sparse idiosyncratic covariance estimator. POET and its extensions improve covariance matrix estimators by correcting errors associated with weak factors. Similarly, our proposed ERSE method focuses on estimation errors associated with weak factors. For positively correlated assets, we find that the eigenvectors corresponding to weak factors converge to the null space of uniform vectors. By adjusting the deviation of these eigenvectors, we can effectively reduce the estimation error of their corresponding eigenvalues.
	
	Our estimation method is based on the rotation-equivariant estimation framework and contributes to the literature on shrinkage estimation of the covariance matrix. \cite{Stein1986} demonstrate that superior covariance estimators can be constructed by shrinking the eigenvalues of the sample covariance matrix. Ledoit and Wolf \cite{Ledoit2004} introduces a promising shrinkage estimator formed by a convex combination of the sample covariance matrix and the identity matrix scaled by the mean of the sample eigenvalues. Alternatives to the identity matrix as a shrinkage target include the constant correlation model \citep{Ledoit2004a}, the single index model \citep{Ledoit2003}, and the diagonal matrix of sample eigenvalues \citep{Touloumis2015}. These linear shrinkage techniques offer flexible and practical solutions for covariance matrix estimation. A natural extension of \cite{Ledoit2012} is the introduction of a nonlinear shrinkage estimator that applies individual shrinkage intensities to each sample eigenvalue, demonstrating excellent out-of-sample performance in portfolio selection \citep{Ledoit2017}. \cite{Ledoit2020} develops the first analytical formula for nonlinear shrinkage estimation of large-dimensional covariance matrices. Other nonlinear estimation methods have been explored in various decision-making contexts \citep{Won2013, Abadir2014, Lam2016, Nguyen2022,Morstedt2024}. \cite{Barroso2022} proposes determining shrinkage coefficients for eigenvalues and other parameters based on out-of-sample errors. The proposed ERSE method also performs shrinkage estimation of the covariance matrix by enlarging small eigenvalues and shrinking large ones. We demonstrate that the ERSE method is equivalent to applying shrinkage corrections to some or all eigenvalues, where the deviation of each eigenvector determines the shrinkage coefficient for its corresponding eigenvalue.
	
	The remainder of this paper is organized as follows: Section \ref{PosCor} examines the positive correlations among factor-sorted portfolios from Ken-French's website. Section \ref{de} investigates the misalignment of sample eigenvectors. Section \ref{model} proposes a novel covariance matrix estimator designed for positively correlated assets. Section \ref{rela} discusses the relationship between the proposed estimator and existing methods. Section \ref{res} presents empirical results, and Section \ref{conclu} concludes the paper.
	
	\section{Positive Correlations in Factor-Sorted Portfolios}
	\label{PosCor}
	In this section, we examine the correlations among factor-sorted portfolios derived from the well-known Ken-French database\footnote{Ken-French website: http://mba.tuck.dartmouth.edu/pages/faculty/ken.french/data\_library.html}. The database ranks companies based on various financial characteristics, such as market capitalization, book-to-market ratio, and momentum, and groups them using percentile breakpoints. Each stock group is treated as a tradable asset, with returns represented by the value-weighted returns of its constituent assets. 
	
	We adopt 19 datasets of factor-sorted portfolios, including two industry-level portfolios: the 30-industry (30IN) and 49-industry (49IN) portfolios representing the U.S. stock market; fourteen 5$\times$5 double-sorted portfolios formed on size and book-to-market (25SB), size and operating profitability (25SO), size and investment (25SI), book-to-market and operating profitability (25BO), book-to-market and investment (25BI), operating profitability and investment (25OI), size and momentum (25SM), size and short-term reversal (25SS), size and long-term reversal (25SL), size and accruals (25SA), size and market beta (25SBe), size and net share issues (25SN)\footnote{On the Ken-French website, 25SN is structured as a 5$\times$7 dataset with 35 instead of 25 portfolios.}, size and variance (25SV), and size and residual variance (25SR); three 10$\times$10 portfolios double-sorted portfolios formed on  size and book-to-market (100SB), size and operating profitability (100SO), and size and investment (100SI). 
	
	
	We reconstruct 10 datasets with varying numbers of underlying assets. The first five datasets directly use the original data from 30IN, 49IN, 100SB, 100SO, and 100SI. Since each dataset on the Ken-French website contains no more than 100 assets, we adopt the method outlined in \cite{Shi2020} to combine multiple datasets and create larger test datasets. The remaining five datasets, denoted as MD datasets, are synthesized by combining subsets of the original datasets. Specifically, 150MD combines six 5$\times$5 datasets (25SB, 25SO, 25SI, 25BO, 25BI, and 25OI), 200MD integrates eight 5$\times$5 datasets (25SM, 25SS, 25SL, 25SA, 25SBe, 25SN, 25SV, and 25SR), and 300MD is constructed from the three 10$\times$10 datasets (100SB, 100SO, and 100SI). To expand further, 500MD combines 300MD and 200MD, and the final dataset, 650MD, combines 500MD and 150MD. These datasets are constructed using monthly frequency data from July 1969 to June 2024, covering 55 years (660 months). We exclude assets with more than 10 missing observations, and missing values within the remaining assets are replaced with zeros to maintain a balanced panel for analysis. Table \ref{ds} summarizes the key characteristics of the datasets.
	
	\begin{table}[!htb]\small
		\centering
		\caption{Summary statistics of the datasets}
		\resizebox{\textwidth}{!}{
			\begin{tabular}{lcccccccccc}
				\toprule
				& 30IN & 49IN & 100SB & 100SI & 100SO & 150MD & 200MD & 300MD & 500MD & 650MD \\
				\midrule
				\# of Assets & 30 & 49 & 96 & 99 & 99 & 150 & 210 & 294 & 504 & 654 \\
				\# of Months & 660 & 660 & 660 & 660 & 660 & 660 & 660 & 660 & 660 & 660 \\
				Average Mean & 1.0042 & 0.9966 & 1.0756 & 1.0991 & 1.0724 & 1.0488 & 1.0609 & 1.0824 & 1.0734 & 1.0678 \\
				Average Variance & 41.399 & 46.720 & 40.004 & 37.472 & 38.513 & 32.233 & 35.847 & 38.650 & 37.482 & 36.278 \\
				Maximum Correlation & 0.8589 & 0.8666 & 0.9516 & 0.9398 & 0.9394 & 0.9797 & 0.9972 & 0.9741 & 0.9972 & 0.9972 \\
				Average Correlation & 0.5983 & 0.5644 & 0.7437 & 0.7679 & 0.7642 & 0.7909 & 0.8251 & 0.7594 & 0.7862 & 0.7855 \\
				Minimum Correlation & 0.2478 & 0.0717 & 0.4081 & 0.5017 & 0.3598 & 0.4511 & 0.4163 & 0.3598 & 0.3411 & 0.3411 \\
				\bottomrule
			\end{tabular}
		}
		\label{ds}
	\end{table}
	
	Table \ref{ds} highlights a significant characteristic of factor-sorted portfolios in the U.S. stock market: the strong positive correlations among asset returns. Two key observations emphasize this comovement phenomenon. First, the average pairwise correlation is consistently high, with correlation coefficients exceeding 0.5 across all datasets. The minimum pairwise correlation coefficient remains positive for all datasets, with most datasets showing values well above 0.2. The 49IN dataset has the lowest minimum correlation at 0.0717, which is still positive. This finding indicates that none of the datasets include assets with negative correlation coefficients, further reinforcing the overall trend of positive correlations between asset returns in the factor-sorted portfolios.
	
	Using a sliding window approach, we further examine the dynamic evolution of correlation coefficients among the factor-sorted portfolios. Specifically, we use a window length of 120 months, a common choice in financial research \citep{DeMiguel2009, Shi2020}. Starting from month 121, we calculate the correlation coefficients using the most recent 120 months of return data for each time step. Our analysis focuses on two key measures of correlation: the average and minimum pairwise correlations between assets. Figure \ref{cmin} shows the time evolution of these two correlation measures across different datasets. The results indicate that the minimum correlation between assets consistently remains above zero across all datasets, while the average correlation stays relatively high throughout the observed period. These findings suggest a persistent positive correlation among assets across all datasets, supporting the hypothesis that factor-sorted portfolios exhibit a strong positive correlation over time.
	\begin{figure}[!htb]
		\centering
		\includegraphics[width=1.05\linewidth]{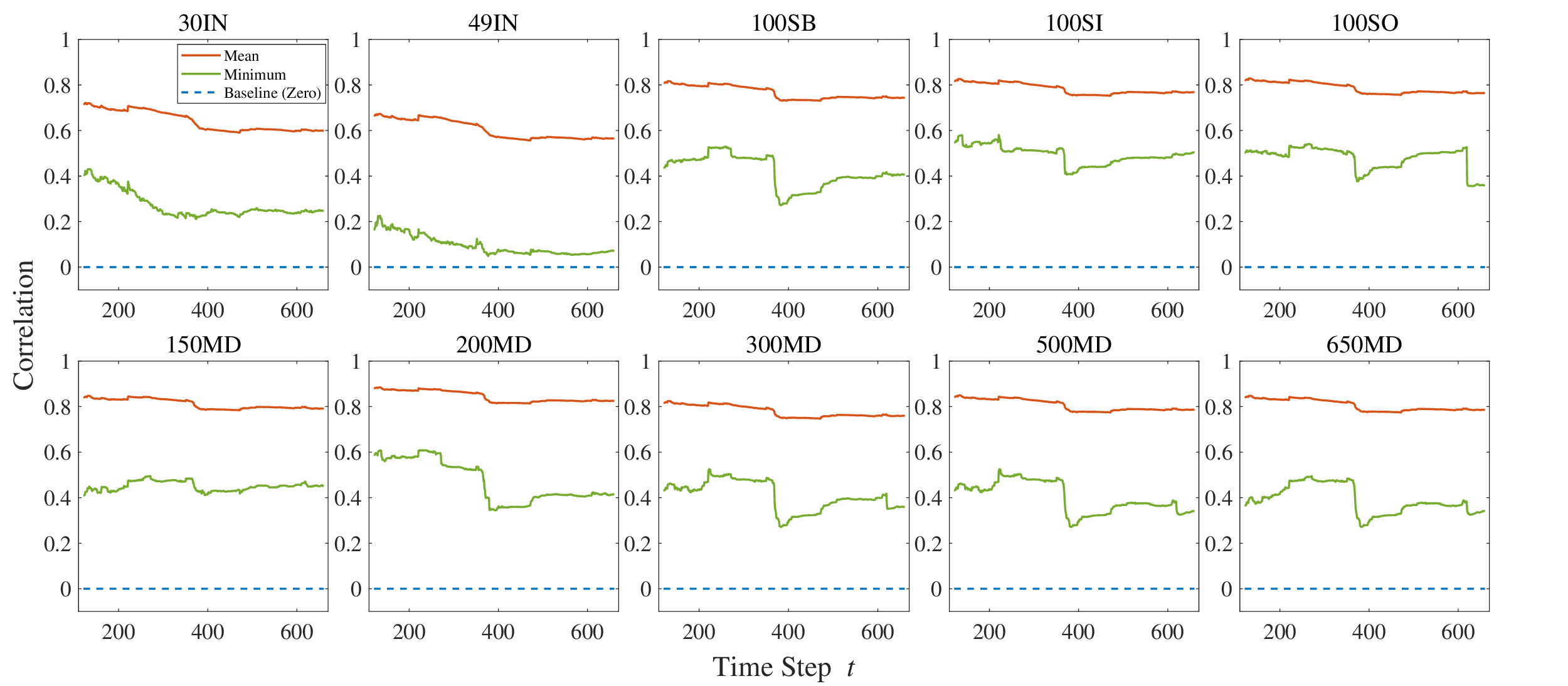}
		\caption{Dynamic Evolution of the Average and Minimum Correlation coefficients}
		\label{cmin}
	\end{figure}

	In conclusion, the analysis of correlation coefficients reveals strong and persistent positive correlations among factor-sorted portfolios. The findings highlight a decision-making environment in financial markets characterized by positively correlated assets, which serves as the foundation and motivation for this study. As a key statistical tool for measuring risk, the covariance matrix plays a crucial role in asset pricing and portfolio optimization. Building on this insight, we propose a covariance matrix estimation method specifically designed for positively correlated assets, with the aim of providing more effective risk management tools for such decision-making scenarios.
	
	\section{Deviation of Eigenvectors from the Null Space of the Uniform Vector}
	\label{de}
	The number of assets in the investable universe is denoted as \( n \). We assume that the return vector for these \( n \) assets follows a multivariate normal distribution with the covariance matrix \( \Sigma \). At the start of the decision period, we have access to the most recent \( L \) periods of historical return data, represented as \( \boldsymbol{r}_1, \ldots, \boldsymbol{r}_L \), which serve as the basis for making investment decisions. The sample covariance matrix, which acts as the maximum likelihood estimator, is given by:
	\begin{align}
		\Sigma^{S} = \frac{1}{L-1}\cdot\sum_{t=1}^{L}(\boldsymbol{r}_{t}-{\boldsymbol{m}})(\boldsymbol{r}_{t}-{\boldsymbol{m}})^{T}
	\end{align}
	where $\boldsymbol{m}$ is the sample mean vector of the returns $\boldsymbol{r}_{1},\ldots,\boldsymbol{r}_{L}$. Then, the sample correlation matrix, denoted by $R^{S}$, can be derived from the covariance matrix $\Sigma^{S}$ as:
	\begin{align}
		R^{S} = (D^{S})^{-1} \Sigma^{S} (D^{S})^{-1}
	\end{align}
	where $D^{S}={\rm Diag}(\sqrt{\Sigma^{S}_{11}},\ldots,\sqrt{\Sigma^{S}_{nn}})$ is a diagonal matrix with diagonal elements representing the sample standard deviations. 
	
	As discussed earlier, we focus on the problem of covariance matrix estimation for positively correlated assets. In this context, we assume that the historical returns of any pair of assets exhibit a positive correlation, as outlined in Assumption \ref{ass_pos}.
	\begin{assumption}
		\label{ass_pos}
		The sample correlation matrix for the \( n \) assets is denoted by \( R^S = (\rho_{i,j})_{n \times n} \). We assume that \( R^S \) is a positive matrix, with each sample correlation coefficient \( \rho_{i,j} > 0 \), indicating that all asset pairs exhibit a positive correlation.
	\end{assumption}
	
	The assumption stated above generally does not apply to individual stocks, where correlations between two assets can be either positive or negative. However, this assumption holds in certain decision-making environments due to the comovement phenomenon in financial markets. For example, as discussed in Section \ref{PosCor}, empirical evidence suggests that sample covariance matrices for factor-sorted portfolios are typically positive, indicating that the assumption holds when assets are grouped based on common factors. Consequently, investors can easily construct positively correlated portfolios using factor-sorted portfolios. Additionally, this assumption may hold in environments driven by information-based comovement or in scenarios where assets are clustered based on their correlations.
	
	In this paper, we construct the covariance matrix estimator by applying the rotation-equivariant estimation method to the correlation matrix, as outlined in Assumption \ref{ass_c}.
	\begin{assumption}
		\label{ass_c}
		Denote the spectral decomposition of the sample correlation matrix by \( R^{S} = Q^{S} \Lambda^{S} (Q^{S})^{T} \), where \( \Lambda^{S} = {\rm Diag}(\lambda_{1}^{S}, \ldots, \lambda_{n}^{S}) \) is a diagonal matrix containing the \( n \) sample eigenvalues, and \( Q^{S} \) is the matrix whose columns, \( \boldsymbol{q}_{i}^{S} \), are the eigenvectors corresponding to the eigenvalue \( \lambda_{i}^{S} \). We consider a class of covariance matrix estimators of the following form: 
		\begin{align}
			\hat{\Sigma} = D^{S} Q^{S} {\rm Diag}(\hat{\boldsymbol{\lambda}}) (Q^{S})^{T} D^{S}
		\end{align}
		where \( \hat{\boldsymbol{\lambda}} = (\hat{\lambda}_{1}, \ldots, \hat{\lambda}_{n})^{T} \) is an estimator for the vector of eigenvalues.
	\end{assumption}
	
	Existing rotation-equivariant estimators typically focus on shrinking the eigenvalues of the sample covariance matrix \citep{Ledoit2017, Ledoit2022}. In contrast, this study takes a different approach by shrinking the eigenvalues of the correlation matrix. The correlation matrix captures the linear relationships between variables while removing scale effects, and adjusting its eigenvalues may provide a more stable and accurate representation of the underlying dependencies. \cite{Barroso2022} demonstrates the advantages of using the correlation matrix as the estimation target, particularly in enhancing out-of-sample performance in portfolio optimization. It is important to note that we do not claim that using the correlation matrix is inherently superior to using the covariance matrix directly. Moreover, we do not impose the constraint that the diagonal elements of the estimated correlation matrix must equal one. Deviations from one reflect slight adjustments to the sample variances of asset returns.
	
	We revise the sample eigenvalues of the correlation matrix while keeping the sample eigenvectors. The main reason is that improving upon the sample eigenvectors is difficult, as pointed out by \cite{Stein1986} and \cite{Shi2020}. Consequently, the problem of estimating the covariance matrix can be reformulated as a multivariate prediction task focused on the eigenvalues $\hat{\lambda}_{t,1},\ldots,\hat{\lambda}_{t,n}$. It is straightforward to prove that, as long as the $n$ eigenvalue estimators of the correlation matrix are strictly positive, we can reconstruct a positive definite estimator of the covariance matrix based on Assumption \ref{ass_c}.
	
	It is well known that in high-dimensional asset environments, the eigenvalues of the sample covariance (or correlation) matrix often exhibit excessive dispersion, with larger eigenvalues frequently overestimated and smaller eigenvalues typically underestimated \citep{Stein1986, Ledoit2017, Shi2020, Nguyen2022}. The significant underestimation of small eigenvalues often leads to severe instability in the sample covariance matrix. An interesting phenomenon observed in this study is that, for portfolios of positively correlated assets, the eigenvectors associated with these small eigenvalues tend to converge to the null space of the uniform vector, providing important support for the shrinkage estimation method proposed in this paper. To better explain this phenomenon, we first provide the following definition:
	\begin{definition}
		\label{Tq}
		The null space of the uniform vector is the set of all vectors that are orthogonal to the uniform vector, which is formulated as:
		\begin{align}
			N = \left\{ \boldsymbol{x} \in \mathbb{R}^n : \boldsymbol{1}_n^T \boldsymbol{x} = 0 \right\}
		\end{align}
		where $\boldsymbol{1}_n$ is the vector with all elements equal to 1. Given a unit vector \( \boldsymbol{x} \in \mathbb{R}^n \) with \( \boldsymbol{x}^{T}\boldsymbol{x}=1 \), the deviation degree of \( \boldsymbol{x} \) from the null space \( N \) is defined as the square of the projection of \( \boldsymbol{x} \) onto the uniform vector \( \boldsymbol{1}_n \):
		\begin{align}
			T(\boldsymbol{x}) = (\boldsymbol{1}_n^T \boldsymbol{x})^2
		\end{align}
	\end{definition}
	
	The definition of \( T(\boldsymbol{x}) \) indicates how close a unit vector is to the null space of the uniform vector. Clearly, when \( \boldsymbol{x} \) lies within the null space \( N \), we have \( T(\boldsymbol{x}) = 0 \). As \( \boldsymbol{x} \) deviates from the null space \( N \), the value of \( T(\boldsymbol{x}) \) increases gradually. The maximum value of \( T(\boldsymbol{x}) \) occurs when \( \boldsymbol{x} \) is the uniform vector, i.e., \( \boldsymbol{x} = \pm\boldsymbol{1}_n/\sqrt{n} \), at which point \( T(\boldsymbol{x}) = n \). For the sake of description, in the following discussion, the ``deviation degree'' of a vector \( \boldsymbol{x} \) is an abbreviation for its deviation degree from the null space of the uniform vector, which refers to the value of \( T(\boldsymbol{x}) \) in Definition \ref{Tq}.
	
	\begin{proposition}
		\label{eqn}
		Denote the spectral decomposition of the sample correlation matrix by \( R^{S} = Q^{S} \Lambda^{S} (Q^{S})^{T} \), where \( \Lambda^{S} = {\rm Diag}(\lambda_{1}^{S}, \ldots, \lambda_{n}^{S}) \) is a diagonal matrix containing the \( n \) sample eigenvalues, and \( Q^{S} = (\boldsymbol{q}_{1}^{S},\ldots,\boldsymbol{q}_{n}^{S}) \) is the matrix whose columns, \( \boldsymbol{q}_{i}^{S} \), are the eigenvectors corresponding to the eigenvalue \( \lambda_{i}^{S} \). We have:
		\begin{align}
			\sum_{i=1}^{n}\lambda_{i}^{S} = \sum_{i=1}^{n}T(\boldsymbol{q}_{i}^{S}) = n
		\end{align}
	\end{proposition}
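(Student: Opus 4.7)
The plan is to establish the two equalities separately, each via a short linear-algebra identity; there is no real obstacle here, only bookkeeping.

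For the first equality $\sum_{i=1}^{n}\lambda_{i}^{S}=n$, I would invoke the fact that the trace is invariant under orthogonal conjugation. Writing $R^S = Q^S \Lambda^S (Q^S)^T$ and using $\operatorname{tr}(AB)=\operatorname{tr}(BA)$ together with $(Q^S)^T Q^S = I$, we get $\operatorname{tr}(R^S)=\operatorname{tr}(\Lambda^S)=\sum_{i=1}^{n}\lambda_i^S$. On the other hand, by construction the sample correlation matrix $R^S=(D^S)^{-1}\Sigma^S(D^S)^{-1}$ has all diagonal entries equal to one (each diagonal element of $\Sigma^S$ is divided by itself), so $\operatorname{tr}(R^S)=n$. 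Combining the two gives $\sum_{i=1}^{n}\lambda_i^S = n$.

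For the second equality $\sum_{i=1}^{n}T(\boldsymbol{q}_i^S)=n$, the key observation is that the matrix $Q^S$ is orthogonal, so its columns $\boldsymbol{q}_1^S,\ldots,\boldsymbol{q}_n^S$ form an orthonormal basis of $\mathbb{R}^n$. I would expand $\boldsymbol{1}_n$ in this basis and read off the squared coefficients:
\begin{align}
\sum_{i=1}^{n}T(\boldsymbol{q}_i^S) = \sum_{i=1}^{n}(\boldsymbol{1}_n^T \boldsymbol{q}_i^S)^2 = \bigl\|(Q^S)^T \boldsymbol{1}_n\bigr\|_2^2 = \boldsymbol{1}_n^T Q^S (Q^S)^T \boldsymbol{1}_n = \boldsymbol{1}_n^T \boldsymbol{1}_n = n,
\end{align}
where the penultimate step uses $Q^S (Q^S)^T = I$.

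Chaining the two computations yields $\sum_{i=1}^{n}\lambda_i^S = \sum_{i=1}^{n}T(\boldsymbol{q}_i^S) = n$, which is the claim. The only thing to be a little careful about is that Definition~\ref{Tq} was stated for unit vectors, so I would explicitly note that the columns $\boldsymbol{q}_i^S$ of the orthogonal matrix $Q^S$ are indeed unit vectors, making the evaluation of $T(\boldsymbol{q}_i^S)$ legitimate.
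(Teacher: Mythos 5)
Your proposal is correct and follows essentially the same route as the paper: the first equality via $\operatorname{tr}(R^S)=\operatorname{tr}(\Lambda^S)$ combined with the unit diagonal of the correlation matrix, and the second via $\boldsymbol{1}_n^T Q^S (Q^S)^T \boldsymbol{1}_n = n$. The extra remark that the columns of $Q^S$ are unit vectors (so that $T(\boldsymbol{q}_i^S)$ is well defined per Definition~\ref{Tq}) is a sensible small addition but does not change the argument.
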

	
	\begin{proof}
		It is clear that the diagonal elements of the correlation matrix \( R^{S} \) are all equal to one. Therefore, we have:
		\begin{align}
			\lambda_{1}^{S} + \cdots + \lambda_{n}^{S} = \text{Tr}(R^{S}) = n
		\end{align}
		Since \( Q^{S}(Q^{S})^{T} = I_n \), we have:
		\begin{align}
			\sum_{i=1}^{n} T(\boldsymbol{q}_{i}^{S}) = \sum_{i=1}^{n} (\boldsymbol{1}_n^T \boldsymbol{q}_{i}^{S})^2 = \boldsymbol{1}_n^T Q^{S}(Q^{S})^{T} \boldsymbol{1}_n = \boldsymbol{1}_n^T \boldsymbol{1}_n = n
		\end{align}
		The proof is complete.
	\end{proof}
	
	
	\begin{proposition}
		\label{Tmax}
		Let \( \lambda_1^{S}, \dots, \lambda_n^{S} \) be the $n$ eigenvalues of the correlation matrix \( R^{S} \), with corresponding eigenvectors \( \boldsymbol{q}_1^{S}, \dots, \boldsymbol{q}_n^{S} \). Without loss of generality, we assume that the eigenvalues are ordered such that \( \lambda_1^{S} \leq \lambda_2^{S} \leq \dots \leq \lambda_n^{S} \). Under Assumption \ref{ass_pos}, we have:
		\begin{align}
			T(\boldsymbol{q}_{n}^{S}) \geq \lambda_n^{S} \geq n M(R^{S})
		\end{align}
		where \( M(R^{S}) = \boldsymbol{1}_n^T R^{S} \boldsymbol{1}_n/{n^2} \) represents the mean value of all elements in the correlation matrix \( R^{S} \). 
	\end{proposition}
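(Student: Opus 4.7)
The plan is to handle the two inequalities in the chain independently, since they come from quite different arguments. For the easier one, $\lambda_n^S \ge n M(R^S)$, I will simply invoke the Rayleigh quotient characterization of the largest eigenvalue of a symmetric matrix with the trial unit vector $\boldsymbol{1}_n/\sqrt{n}$, which immediately gives
\begin{align*}
\lambda_n^S \;\ge\; \frac{\boldsymbol{1}_n^T R^S \boldsymbol{1}_n}{\boldsymbol{1}_n^T \boldsymbol{1}_n} \;=\; \frac{n^2 M(R^S)}{n} \;=\; n M(R^S).
\end{align*}
No appeal to positivity of $R^S$ is needed for this step.

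For the first inequality $T(\boldsymbol{q}_n^S) \ge \lambda_n^S$, I will invoke the Perron--Frobenius theorem. Under Assumption \ref{ass_pos} the matrix $R^S$ is entrywise strictly positive and symmetric, so Perron--Frobenius ensures that $\lambda_n^S$ is a simple eigenvalue and that its associated eigenvector $\boldsymbol{q}_n^S$ can be chosen with all entries strictly positive; in particular $q_{n,i}^S q_{n,j}^S \ge 0$ for every pair $(i,j)$. Expanding both quantities as double sums, I will then write
\begin{align*}
T(\boldsymbol{q}_n^S) - \lambda_n^S \;=\; \bigl(\boldsymbol{1}_n^T \boldsymbol{q}_n^S\bigr)^2 - (\boldsymbol{q}_n^S)^T R^S \boldsymbol{q}_n^S \;=\; \sum_{i,j}\bigl(1 - R^S_{ij}\bigr)\, q_{n,i}^S\, q_{n,j}^S.
\end{align*}
Because $R^S$ is a correlation matrix its off-diagonal entries satisfy $R^S_{ij}\le 1$ (with equality on the diagonal), so $1 - R^S_{ij} \ge 0$; combined with the sign information above, every summand is non-negative, and the inequality $T(\boldsymbol{q}_n^S) \ge \lambda_n^S$ follows.

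The main obstacle I anticipate is conceptual rather than computational: it lies in the sign alignment of the dominant eigenvector. If $\boldsymbol{q}_n^S$ were allowed to have mixed signs, some cross-products $q_{n,i}^S q_{n,j}^S$ would be negative and the term-by-term inequality using $1 - R^S_{ij} \ge 0$ would break down, even though the bound $R^S_{ij}\le 1$ is automatic. Assumption \ref{ass_pos} is therefore used in an essential way: strict entrywise positivity of $R^S$ is exactly what activates Perron--Frobenius in its strong form and delivers the uniform positivity of $\boldsymbol{q}_n^S$ that the argument relies on.
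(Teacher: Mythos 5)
Your proposal is correct and follows essentially the same route as the paper: the Rayleigh quotient at $\boldsymbol{1}_n/\sqrt{n}$ for the second inequality, and Perron--Frobenius positivity of $\boldsymbol{q}_n^S$ combined with the entrywise bound $R^S_{ij}\le 1$ for the first (the paper phrases your double sum as $(\boldsymbol{q}_n^S)^T E\, \boldsymbol{q}_n^S \ge (\boldsymbol{q}_n^S)^T R^S \boldsymbol{q}_n^S$ with $E$ the all-ones matrix). Your closing remark correctly identifies where Assumption \ref{ass_pos} is essential.
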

	
	\begin{proof}
		Since $\lambda_n^{S}$ is the largest eigenvalue of the correlation matrix $R^S$, it follows that for any unit vector $\boldsymbol{x}$ (i.e., $\boldsymbol{x}^T \boldsymbol{x} = 1$), the following inequality holds:
		\begin{align}
			\lambda_n^{S} \geq \boldsymbol{x}^T R^S \boldsymbol{x}
		\end{align}
		Applying this to the eigenvector $\boldsymbol{q}_n^S$ corresponding to $\lambda_n^{S}$, we obtain:
		\begin{align}
			\lambda_n^{S} = (\boldsymbol{q}_n^S)^T R^S \boldsymbol{q}_n^S \geq \frac{\boldsymbol{1}_n^T}{\sqrt{n}} R^S \frac{\boldsymbol{1}_n}{\sqrt{n}} = \frac{\boldsymbol{1}_n^T R^S \boldsymbol{1}_n}{n} = n M(R^{S})
		\end{align}
		
		It follows from Assumption \ref{ass_pos} that $R^S$ is an irreducible positive matrix. According to the Perron-Frobenius theorem, the largest eigenvalue $\lambda_n^{S}$ (called the Perron root) is real, positive, and simple, and the corresponding eigenvector $\boldsymbol{q}_n^S$ has strictly positive components, i.e., $\boldsymbol{q}_n^S>0$. Let $E \in \mathbb{R}^{n \times n}$ be a matrix with all entries equal to one. Since $\boldsymbol{q}_n^S>0$ and the elements of the correlation matrix are between 0 and 1, we have:
		\begin{align}
			T(\boldsymbol{q}_n^S) = (\boldsymbol{q}_n^S)^T E \boldsymbol{q}_n^S \geq (\boldsymbol{q}_n^S)^T R^S \boldsymbol{q}_n^S = \lambda_n^{S}
		\end{align}
		Therefore, we conclude that:
		\begin{align}
			T(\boldsymbol{q}_n^S) \geq \lambda_n^{S} \geq n M(R^{S})
		\end{align}
		This completes the proof.
	\end{proof}
	
	
	From Proposition \ref{eqn}, we know that the sum of the \( n \) eigenvalues and the sum of the deviation degrees of the \( n \) eigenvectors both equal \( n \). According to Proposition \ref{Tmax}, for positively correlated assets, the largest eigenvalue \( \lambda_n \) is always greater than or equal to \( n M(R^S) \). This observation indicates that the largest eigenvalue is typically much larger than the others, supporting the conclusion that a dominant single strong factor exists, as discussed in \cite{Dai2024}. Furthermore, we find that the deviation degree \( T(\boldsymbol{q}_n^S) \) is greater than or equal to \( \lambda_n \), with equality holding only when all elements of \( R^S \) are equal to 1. Clearly, the condition for \( T(\boldsymbol{q}_n^S) = \lambda_n \) is stringent, suggesting a substantial gap exists between the two values. This analysis highlights that, in portfolios of positively correlated assets, the deviation degree of the largest eigenvector significantly dominates over that of the other eigenvectors. This imbalance is more pronounced than the excessive dispersion of the eigenvalues. To further illustrate this, we present a numerical comparison of \( \lambda_n^S \) and \( T(\boldsymbol{q}_n^S) \) using the ten factor-sorted portfolio datasets tested in Section \ref{PosCor}. The dominance of \( \lambda_n^S \) and \( T(\boldsymbol{q}_n^S) \) in positively correlated assets is clearly demonstrated in Figure \ref{smax}.
	\begin{figure}[!htb]
		\centering
		\includegraphics[width=1\linewidth]{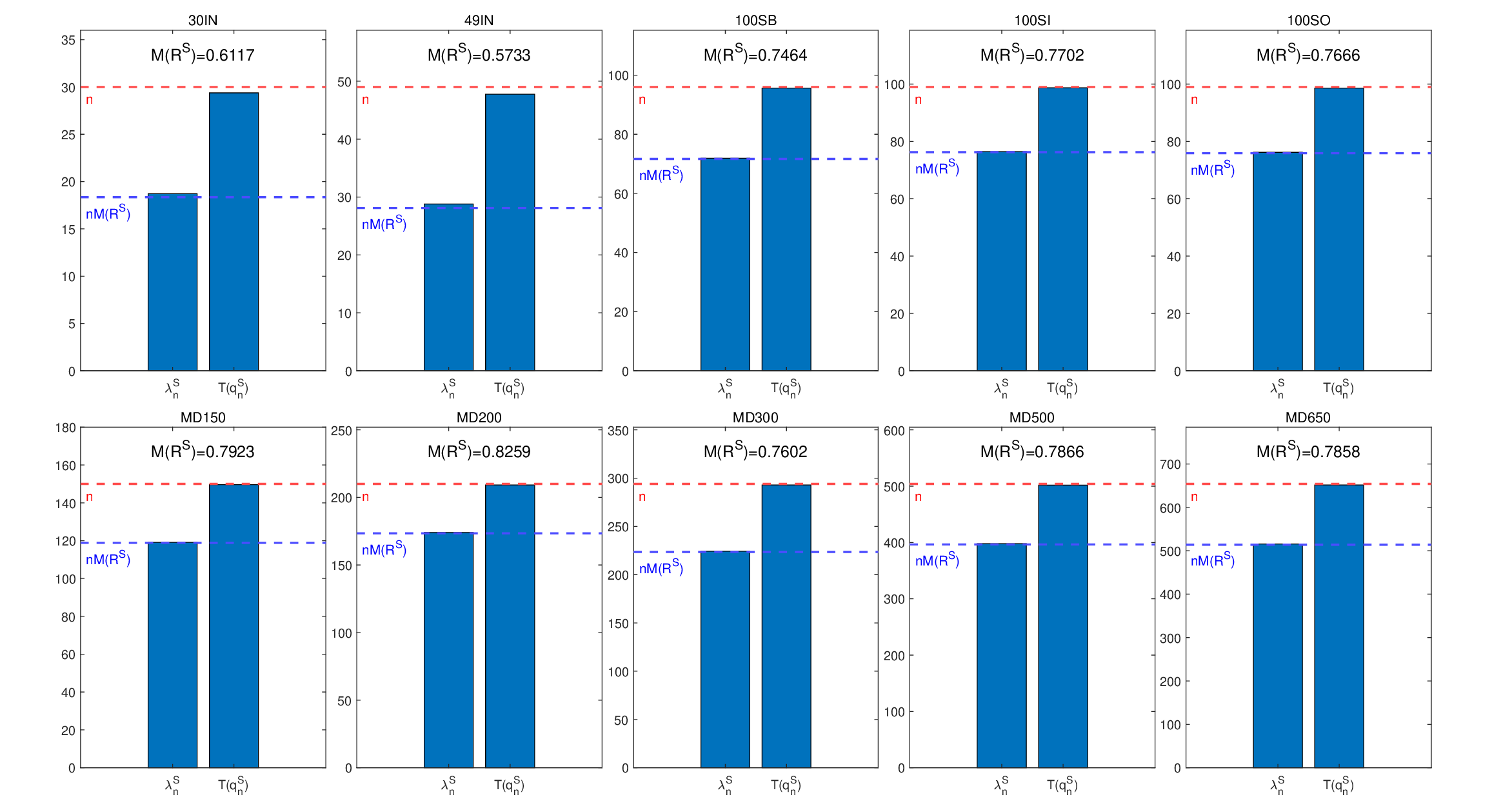}
		\caption{Values of \(\lambda_{n}^{S}\) and \(T(\boldsymbol{q}_{n}^S)\) in the Ten Factor-Sorted Portfolio Datasets}
		\label{smax}
	\end{figure}
	
	As shown in Figure \ref{smax}, for assets with high positive correlations, the deviation degree of the corresponding eigenvector, \(T(\boldsymbol{q}_{n}^S)\), is significantly greater than \(nM(R^{S})\). In particular, for high-dimensional assets, the value of \(T(\boldsymbol{q}_{n}^S)\) approaches \(n\). Note that Proposition \ref{eqn} proves that the sum of the deviation degrees of all \(n\) eigenvectors equals \(n\), implying that the deviation degrees of the remaining \(n-1\) eigenvectors are compressed to minimal values. Furthermore, we provide evidence that the eigenvectors corresponding to the weak factors are extremely close to the null space of the uniform vector.
	
	\begin{proposition}
		\label{smin}
		Let \( R^S = (\boldsymbol{\rho}_1, \dots, \boldsymbol{\rho}_n) \) represent the sample correlation coefficient matrix, where \( \boldsymbol{\rho}_i = (\rho_{i1}, \dots, \rho_{in})^{T} \) for \( i = 1, 2, \dots, n \). We assume that \( R^S \) is a positive semi-definite matrix with some zero eigenvalues. Let \( \boldsymbol{q} \) be a unit vector corresponding to an eigenvector of \( R^S \) associated with a zero eigenvalue. The deviation degree of the eigenvector \( \boldsymbol{q} \) is bounded by:
		\begin{align}
			\label{eq1}
			T(\boldsymbol{q}) \le n - \max_{1 \le i \le n} \frac{\left( \boldsymbol{1}^{T}\boldsymbol{\rho}_{i}\right)^2}{\boldsymbol{\rho}_{i}^{T}\boldsymbol{\rho}_{i}}
		\end{align}
		Furthermore, let $b =\max_{i}\min_{j}\rho_{ij}$, then we have
		\begin{align}
			\label{eq2}
			T(\boldsymbol{q}) \le n-\frac{(1+(n-1)b)^2}{1+(n-1)b^2}
		\end{align}
	\end{proposition}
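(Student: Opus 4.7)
The plan is to derive (1) by a column-by-column Cauchy–Schwarz argument, and then obtain (2) as a corollary of (1) by specialising to the row of $R^{S}$ that attains the outer max-min in the definition of $b$.

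For (1), the starting point is that $R^{S}\boldsymbol{q}=0$ together with the symmetry of $R^{S}$ gives $\boldsymbol{\rho}_{i}^{T}\boldsymbol{q}=0$ for every column $\boldsymbol{\rho}_{i}$. I would decompose $\boldsymbol{q}=\alpha\boldsymbol{1}_{n}/\sqrt{n}+\boldsymbol{q}_{\perp}$ with $\boldsymbol{q}_{\perp}\perp\boldsymbol{1}_{n}$, so that $T(\boldsymbol{q})=n\alpha^{2}$ and $\|\boldsymbol{q}_{\perp}\|^{2}=1-\alpha^{2}$. Substituting into $\boldsymbol{\rho}_{i}^{T}\boldsymbol{q}=0$ and using $\boldsymbol{q}_{\perp}\perp\boldsymbol{1}_{n}$ to replace $\boldsymbol{\rho}_{i}$ by its centred version $\boldsymbol{\rho}_{i}-(\boldsymbol{1}_{n}^{T}\boldsymbol{\rho}_{i}/n)\boldsymbol{1}_{n}$, Cauchy–Schwarz delivers
\[
\frac{\alpha^{2}(\boldsymbol{1}_{n}^{T}\boldsymbol{\rho}_{i})^{2}}{n}\le \Bigl(\boldsymbol{\rho}_{i}^{T}\boldsymbol{\rho}_{i}-\frac{(\boldsymbol{1}_{n}^{T}\boldsymbol{\rho}_{i})^{2}}{n}\Bigr)(1-\alpha^{2}).
\]
Solving for $\alpha^{2}$, multiplying by $n$, and taking the maximum over $i$ yields (1).

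For (2), I would choose $i^{*}\in\arg\max_{i}\min_{j}\rho_{ij}$, so that every entry of $\boldsymbol{\rho}_{i^{*}}$ lies in $[b,1]$ and $\rho_{i^{*}i^{*}}=1$. Writing $u=\boldsymbol{1}_{n}^{T}\boldsymbol{\rho}_{i^{*}}$ and $v=\boldsymbol{\rho}_{i^{*}}^{T}\boldsymbol{\rho}_{i^{*}}$, summing the pointwise bound $\rho_{i^{*}j}\ge b$ gives $u\ge 1+(n-1)b$, and the tangent-line inequality $\rho^{2}\le(1+b)\rho-b$ valid on $[b,1]$ — that is, $(\rho-b)(\rho-1)\le 0$ rearranged — summed entry-wise gives $v\le(1+b)u-nb$. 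Combining with (1) produces $T(\boldsymbol{q})\le n-g(u)$ where $g(u)=u^{2}/[(1+b)u-nb]$, and a direct computation shows $g(1+(n-1)b)=(1+(n-1)b)^{2}/[1+(n-1)b^{2}]$, which is exactly the lower bound matching the right-hand side of (2).

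The main technical obstacle is the final monotonicity step that converts the lower bound $u\ge 1+(n-1)b$ into $g(u)\ge g(1+(n-1)b)$. Differentiating, $g'(u)=u[(1+b)u-2nb]/[(1+b)u-nb]^{2}$, so $g$ is increasing on $u\ge 2nb/(1+b)$, and the crossover inequality $1+(n-1)b\ge 2nb/(1+b)$ simplifies to $((n-1)b-1)(b-1)\ge 0$, which holds immediately when $b\le 1/(n-1)$. In the complementary regime $b>1/(n-1)$, the argument must be closed with further structural input — for example, by exploiting the positive-semidefiniteness of $R^{S}$ or a convexity analysis on the admissible configurations of $\boldsymbol{\rho}_{i^{*}}$ in the box $[b,1]^{n-1}$ — to reinforce either the lower bound on $u$ or the upper bound on $v$, and pinning this step down is where the bulk of the technical effort of the proof lies.
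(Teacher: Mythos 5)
Your proof of the first bound \eqref{eq1} is correct and arrives at exactly the same quantity as the paper, by a slightly different route: the paper maximizes $\boldsymbol{1}_n^T\boldsymbol{x}$ over $\{\boldsymbol{x}:\boldsymbol{\rho}_i^T\boldsymbol{x}=0,\ \boldsymbol{x}^T\boldsymbol{x}=1\}$ with Lagrange multipliers, whereas you decompose $\boldsymbol{q}$ along $\boldsymbol{1}_n$ and its orthogonal complement and apply Cauchy--Schwarz to the centred column. The two computations are equivalent (both evaluate the squared cosine of the angle between $\boldsymbol{1}_n$ and the hyperplane $\boldsymbol{\rho}_i^{\perp}$), and your version avoids the stationary-point bookkeeping, so that half of the proposal is sound.

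The gap you flag in the derivation of \eqref{eq2} is genuine, and it is worse than a missing monotonicity step: the route itself --- specializing \eqref{eq1} to the single row $i^{*}$ and bounding $u^{2}/v$ from below over the box $\{1\}\times[b,1]^{n-1}$ --- cannot be completed when $b>1/(n-1)$, because the target inequality $u^{2}/v\ge(1+(n-1)b)^{2}/(1+(n-1)b^{2})$ is simply false for some admissible rows in that regime. For example, with $n=4$, $b=0.9$ and $\boldsymbol{\rho}_{i^{*}}=(1,1,0.9,0.9)^{T}$ one gets $u^{2}/v=14.44/3.62\approx 3.9890$, while $(1+3b)^{2}/(1+3b^{2})=13.69/3.43\approx 3.9913$; indeed the ratio $(\boldsymbol{1}_n^{T}\boldsymbol{\rho})^{2}/(\boldsymbol{\rho}^{T}\boldsymbol{\rho})$ over that box is minimized at a vertex with roughly $nb/(1+b)$ entries equal to one, not at $(1,b,\dots,b)^{T}$, so no sharpening of the upper bound on $v$ alone can rescue the argument. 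Any correct proof of \eqref{eq2} for large $b$ must therefore use more than row $i^{*}$ (the other rows entering the max in \eqref{eq1}, or the positive semidefiniteness of $R^{S}$), and $b>1/(n-1)$ is precisely the regime the paper cares about (average correlations above $0.5$, $n$ up to $654$, and the subsequent $n\to\infty$ limit with $b$ fixed), so the gap is not a corner case. For completeness: the paper's own proof establishes only \eqref{eq1} and explicitly stops there, so you have not overlooked an argument the authors supply --- but your proposal, as written, does not prove \eqref{eq2} either.
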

	
	\begin{proof}
		Since \( \boldsymbol{q} \) is an eigenvector corresponding to a zero eigenvalue of \( R^S \), we have the following condition:
		\begin{align}
			\boldsymbol{\rho}_{i}^{T} \boldsymbol{q} = 0, \quad i = 1, \dots, n
		\end{align}
		Next, we formulate the following optimization problem:
		\begin{align}
			\max_{\boldsymbol{x}} \quad & \boldsymbol{1}_n^T \boldsymbol{x} \\
			\text{subject to} \quad & \boldsymbol{\rho}_{i}^T \boldsymbol{x} = 0 \\
			& \boldsymbol{x}^T \boldsymbol{x} = 1
		\end{align}
		The Lagrangian function for this problem is:
		\begin{align}
			\mathcal{L}(\boldsymbol{x}, \lambda, \nu) = \boldsymbol{1}_n^T \boldsymbol{x} + \lambda (\boldsymbol{\rho}_i^T \boldsymbol{x}) + \nu (\boldsymbol{x}^T \boldsymbol{x} - 1)
		\end{align}
		Setting \( \Delta \mathcal{L} = 0 \), we obtain the following system of equations:
		\begin{align}
			& \boldsymbol{1}_n + \lambda \boldsymbol{\rho}_i + 2 \nu \boldsymbol{x} = 0 \label{l1} \\
			& \boldsymbol{\rho}_i^T \boldsymbol{x} = 0 \label{l2} \\
			& \boldsymbol{x}^T \boldsymbol{x} = 1 \label{l3}
		\end{align}
		From Formula \eqref{l1}, we solve for \( \boldsymbol{x} \):
		\begin{align}
			\boldsymbol{x} = \frac{\boldsymbol{1}_n - \lambda \boldsymbol{\rho}_i}{2 \nu}
		\end{align}
		Substituting this expression into Formulas \eqref{l2} and \eqref{l3}, we obtain:
		\begin{align}
			\lambda = \frac{\boldsymbol{1}_n^T \boldsymbol{\rho}_i}{\boldsymbol{\rho}_i^T \boldsymbol{\rho}_i},\quad \nu = \pm \frac{1}{4} \sqrt{n - \frac{(\boldsymbol{1}_n^T \boldsymbol{\rho}_i)^2}{\boldsymbol{\rho}_i^T \boldsymbol{\rho}_i}}
		\end{align}
		Clearly, there are two stationary points where the objective function \( \boldsymbol{1}_n^T \boldsymbol{x} \) attains its maximum and minimum values, respectively. The values of the objective function at these points are opposites of each other. Denote \( D = \{\boldsymbol{x} \in \mathbb{R}^n : \boldsymbol{\rho}_i^T \boldsymbol{x} = 0, \boldsymbol{x}^T \boldsymbol{x} = 1\} \), then we have:
		\begin{align}
			\max_{\boldsymbol{x} \in D} \left( \boldsymbol{1}_n^T \boldsymbol{x} \right)^2 = \frac{\left( \boldsymbol{1}_n^T \boldsymbol{1}_n - \lambda \boldsymbol{1}_n^T \boldsymbol{\rho}_i \right)^2}{4 \nu^2} = n - \frac{(\boldsymbol{1}_n^T \boldsymbol{\rho}_i)^2}{\boldsymbol{\rho}_i^T \boldsymbol{\rho}_i}
		\end{align}
		Since \( \boldsymbol{q} \in D \), we have:
		\begin{align}
			\label{eq_rho}
			T(\boldsymbol{q}) = \left( \boldsymbol{1}_n^T \boldsymbol{q} \right)^2 \le \min_{1 \le i \le n} \left\{ n - \frac{(\boldsymbol{1}_n^T \boldsymbol{\rho}_i)^2}{\boldsymbol{\rho}_i^T \boldsymbol{\rho}_i} \right\} = n - \max_{1 \le i \le n} \left\{ \frac{(\boldsymbol{1}_n^T \boldsymbol{\rho}_i)^2}{\boldsymbol{\rho}_i^T \boldsymbol{\rho}_i} \right\}
		\end{align}
		This completes the proof of \eqref{eq1}.
	\end{proof}
	
	
	According to Proposition \ref{smin}, it is straightforward to derive that for an eigenvector \( \boldsymbol{q} \) corresponding to a zero eigenvalue, we have:
	\begin{align}
		\lim_{n\to \infty} T(\boldsymbol{q}) \le \lim_{n\to \infty} \frac{(n-1)(1-b)^2}{1+(n-1)b^2} = \frac{(1-b)^2}{b^2}
	\end{align}
	It is important to note that the equality condition in the above inequality is quite stringent. When the value of \( b \) is large, as is commonly observed in the ten factor-sorted portfolio datasets tested in this paper, the eigenvectors corresponding to weaker factors in high-dimensional portfolios tend to converge significantly to the null space of the uniform vector. 
	
	In summary, we observe that in high-dimensional portfolios with positive correlations, the deviation degrees of the eigenvectors of the sample correlation matrix exhibit a highly unbalanced distribution. Specifically, the deviation degree of the eigenvector associated with the largest eigenvalue dominates, while the eigenvectors corresponding to weak factors display minimal deviation degrees. This imbalance is even more pronounced than the well-documented imbalance in the sample eigenvalues. Previous research has shown that directly shrinking the differences between eigenvalues can lead to more accurate covariance matrix estimates \citep{Ledoit2012, Shi2020, Barroso2022}. Thus, a natural approach is to mitigate the eigenvectors' imbalance in their deviation degrees, which would help construct a more robust covariance matrix estimator.
	
	\section{Eigenvector Rotation Shrinkage Estimator}
	\label{model}
	
	In this section, we introduce a novel covariance matrix estimator, the Eigenvector Rotation Shrinkage Estimator (ERSE), for positively correlated assets. The core idea of ERSE is to impose a constraint on the eigenvectors of the correlation matrix within the rotation-equivariant estimation framework, ensuring that their deviation degree meets a pre-specified threshold. To achieve this, we propose the Paired Eigenvalue Rotation (PER) technique, which adjusts the deviation degrees of the eigenvectors through pairwise rotations. Specifically, the PER technique iteratively rotates the two eigenvectors with the largest difference in deviation degree. This process continues until the deviation degrees of all eigenvectors, particularly those corresponding to weak factors, reach the specified threshold, ensuring that the constraint is satisfied for the entire set of eigenvectors.
	
	We denote the \(n\) estimated eigenvectors by \( \hat{\boldsymbol{q}}_1, \hat{\boldsymbol{q}}_2, \dots, \hat{\boldsymbol{q}}_n \). The constraint on the deviation degrees of the eigenvectors is formulated as
	\begin{align}
		\label{del}
		T(\hat{\boldsymbol{q}}_{i}) \ge \delta, \quad i = 1,\ldots,n
	\end{align}
	where $\delta$ is a pre-specified threshold that defines the minimum allowable deviation degree for each eigenvector. The choice of \( \delta \) directly influences the shrinkage process by controlling the extent of the eigenvector rotations. Specifically, all eigenvectors with deviation degrees below \( \delta \) are selected as the targets for the PER technique. As discussed in Section \ref{de}, the eigenvectors corresponding to weak signals typically exhibit small deviation degrees, making them especially likely to be selected for adjustment.
	
	The threshold \( \delta \) is constrained within the interval \([0, 1]\). When \( \delta \leq 0 \), the constraint in Formula \eqref{del} becomes inactive, as \( T(\hat{\boldsymbol{q}}_i) \) is always non-negative for any eigenvector. In this case, no adjustments are made to the eigenvectors. As \( \delta \) increases, both the likelihood and the angle of rotation increase. When \( \delta = 1 \), the constraint forces all eigenvectors to satisfy \( T(\hat{\boldsymbol{q}}_1) = \cdots = T(\hat{\boldsymbol{q}}_n) = 1 \), since the condition \( \sum_{i=1}^{n} T(\hat{\boldsymbol{q}}_i) = n \) always holds (see Proposition \ref{eqn}). Finally, when \( \delta > 1 \), the constraint cannot be satisfied, as it would require the sum of the deviation degrees to exceed \( n \), which contradicts the result in Proposition \ref{eqn}.
	
	We now outline the details of implementing the Paired Eigenvalue Rotation (PER) technique. In this method, we begin by selecting the two eigenvectors with the most significant difference in their deviation degrees as the targets for rotation. The goal of the PER technique is to rotate these two eigenvectors to positions that satisfy the constraint in Formula \eqref{del} while preserving the orthogonality of the eigenvector matrix. Let \( \boldsymbol{q}_1 \) and \( \boldsymbol{q}_2 \) denote the two eigenvectors selected for rotation, where \( T(\boldsymbol{q}_1) < \delta < T(\boldsymbol{q}_2) \) and \( T(\boldsymbol{q}_1) + T(\boldsymbol{q}_2) > 2\delta \)\footnote{These two conditions ensure that eigenvector \( \boldsymbol{q}_1 \) must be adjusted, with at least one feasible solution available to satisfy the constraint in Formula \eqref{del} for both eigenvectors.}. We define the rotation process of the paired eigenvectors as follows:
	\begin{align}
		&\hat{\boldsymbol{q}}_1 = \cos \theta\cdot\boldsymbol{q}_1  + \sin \theta\cdot\boldsymbol{q}_2 \label{ra1}\\
		&\hat{\boldsymbol{q}}_2 = -\sin \theta\cdot\boldsymbol{q}_1 + \cos \theta\cdot\boldsymbol{q}_2 \label{ra2}
	\end{align}
	where \( \theta \) represents the rotation angle. Figure \ref{ro} visually illustrates the rotation process described above. It can be seen that orthogonality between arbitrary pairs of eigenvectors is preserved because the two eigenvectors, \( \boldsymbol{q}_1 \) and \( \boldsymbol{q}_2 \), are rotated pairwise within the plane they lie in. In other words, by replacing the original eigenvectors with the rotated ones, all the eigenvectors still form an orthogonal basis. Another key observation is that this rotation process preserves the total deviation of the two eigenvectors from the uniform vector. Specifically, the rotation increases the deviation degree of \( \boldsymbol{q}_1 \) while decreasing that of \( \boldsymbol{q}_2 \). This property of the PER technique is formally captured in the following proposition:
	\begin{figure}[!htb]
		\centering
		\subfloat[2-Dimensional]{\includegraphics[width=2.9in]{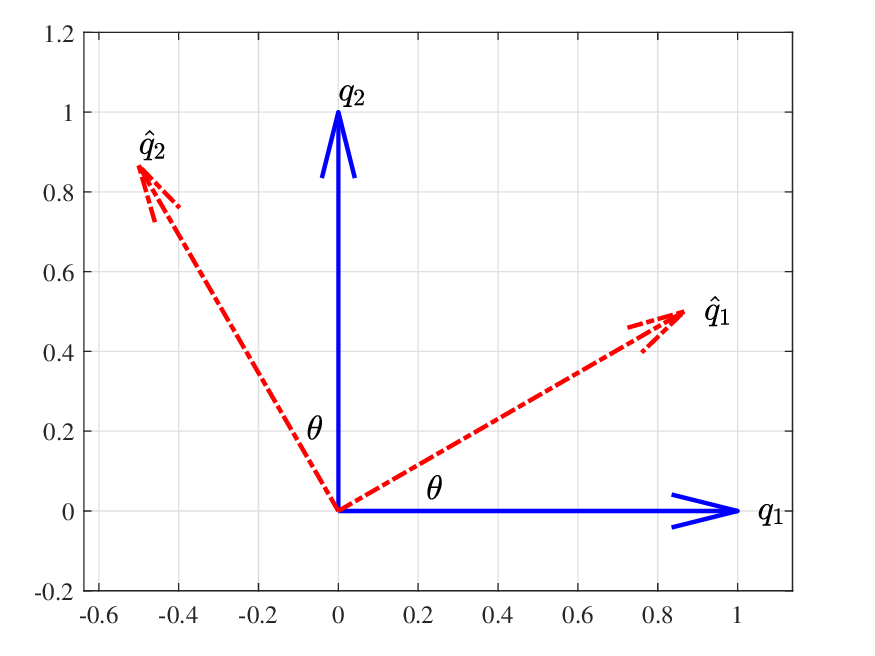}
			\label{ro1}}
		\hfil
		\subfloat[3-Dimensional]{\includegraphics[width=2.9in]{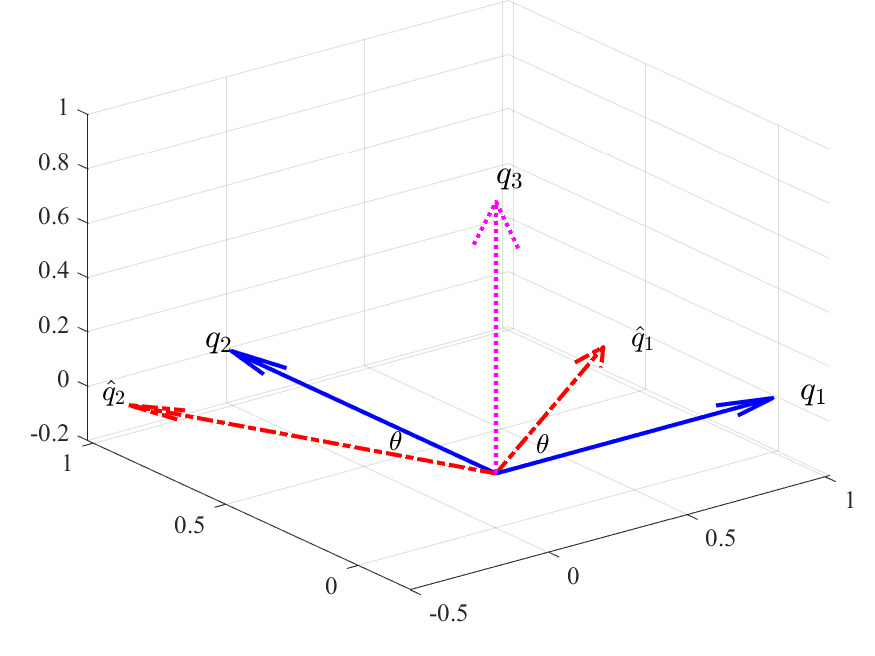}
			\label{ro2}}
		
		\caption{Visualization of the Paired Eigenvector Rotation}
		\label{ro}
	\end{figure}
	\begin{proposition}
		\label{sumT}
		Under the rotation process represented by Formulas \eqref{ra1} and \eqref{ra2}, we have:
		\begin{enumerate}[(1)]
			\item $\hat{\boldsymbol{q}}_1^{T}\hat{\boldsymbol{q}}_1 = \hat{\boldsymbol{q}}_2^{T}\hat{\boldsymbol{q}}_2 = 1$ and $\hat{\boldsymbol{q}}_1^{T}\hat{\boldsymbol{q}}_2=0$ \label{po1}.
			\item For any eigenvector \( \boldsymbol{q} \) among the remaining \( n-2 \) eigenvectors not selected for rotation, 
			\( \boldsymbol{q}^T \hat{\boldsymbol{q}}_1 = \boldsymbol{q}^T \hat{\boldsymbol{q}}_2 = 0 \)\label{po2}.
			\item $T(\hat{\boldsymbol{q}}_1) + T(\hat{\boldsymbol{q}}_2) = T(\boldsymbol{q}_1) + T(\boldsymbol{q}_2)$. \label{po3}
		\end{enumerate}
	\end{proposition}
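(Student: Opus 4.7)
The plan is to treat the pairwise rotation as a genuine $2\times 2$ orthogonal transformation acting on the coordinates of $\boldsymbol{q}_1$ and $\boldsymbol{q}_2$ while leaving the other $n-2$ basis vectors untouched, and then read off the three claims from the isometric nature of that transformation.

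For part (\ref{po1}), I would simply expand $\hat{\boldsymbol{q}}_1^T\hat{\boldsymbol{q}}_1$, $\hat{\boldsymbol{q}}_2^T\hat{\boldsymbol{q}}_2$, and $\hat{\boldsymbol{q}}_1^T\hat{\boldsymbol{q}}_2$ using Formulas \eqref{ra1}--\eqref{ra2}, invoke the orthonormality $\boldsymbol{q}_1^T\boldsymbol{q}_1 = \boldsymbol{q}_2^T\boldsymbol{q}_2 = 1$ and $\boldsymbol{q}_1^T\boldsymbol{q}_2 = 0$ of the original sample eigenvectors, and collapse everything via $\cos^2\theta + \sin^2\theta = 1$. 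This is a one-line calculation in each case.

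For part (\ref{po2}), I would exploit the fact that any remaining eigenvector $\boldsymbol{q}$ of the symmetric matrix $R^S$ is orthogonal to both $\boldsymbol{q}_1$ and $\boldsymbol{q}_2$. Since $\hat{\boldsymbol{q}}_1$ and $\hat{\boldsymbol{q}}_2$ are linear combinations of $\boldsymbol{q}_1$ and $\boldsymbol{q}_2$, forming $\boldsymbol{q}^T\hat{\boldsymbol{q}}_i$ yields zero immediately. Combined with part (\ref{po1}), this shows that replacing $\boldsymbol{q}_1,\boldsymbol{q}_2$ by $\hat{\boldsymbol{q}}_1,\hat{\boldsymbol{q}}_2$ still produces an orthonormal basis.

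For part (\ref{po3}), which is the crux of the proposition, I would denote $a = \boldsymbol{1}_n^T \boldsymbol{q}_1$ and $b = \boldsymbol{1}_n^T \boldsymbol{q}_2$, so that applying $\boldsymbol{1}_n^T$ to Formulas \eqref{ra1}--\eqref{ra2} gives $\boldsymbol{1}_n^T \hat{\boldsymbol{q}}_1 = a\cos\theta + b\sin\theta$ and $\boldsymbol{1}_n^T \hat{\boldsymbol{q}}_2 = -a\sin\theta + b\cos\theta$. Squaring and adding, the cross terms $\pm 2ab\cos\theta\sin\theta$ cancel and the remaining terms collapse via $\cos^2\theta + \sin^2\theta = 1$ to yield $a^2 + b^2 = T(\boldsymbol{q}_1) + T(\boldsymbol{q}_2)$. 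Conceptually this is the statement that the $2\times 2$ rotation matrix is orthogonal and hence preserves the Euclidean norm of the vector $(a,b)^T$ of projections onto $\boldsymbol{1}_n$; no real obstacle arises, and this invariance is precisely what justifies calling the PER step a ``shrinkage'' that redistributes deviation between paired eigenvectors while conserving their sum.
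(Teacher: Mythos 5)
Your proposal is correct and follows essentially the same route as the paper: parts (\ref{po1}) and (\ref{po2}) by direct expansion using the orthonormality of the sample eigenvectors, and part (\ref{po3}) by squaring the projections $\boldsymbol{1}_n^T\hat{\boldsymbol{q}}_1$ and $\boldsymbol{1}_n^T\hat{\boldsymbol{q}}_2$, cancelling the cross terms, and applying $\cos^2\theta+\sin^2\theta=1$. Your added remark that this is just norm preservation by the $2\times 2$ rotation acting on the projection pair is a nice conceptual gloss, but the underlying computation is identical to the paper's.
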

	\begin{proof}
		\label{apro1}
		The statements \ref{po1} and \ref{po2} in the Proposition \ref{sumT} can be easily proven by substituting Formulas \eqref{ra1} and \eqref{ra2} into the expressions for \( \hat{\boldsymbol{q}}_1^{T}\hat{\boldsymbol{q}}_1 \), \( \hat{\boldsymbol{q}}_2^{T}\hat{\boldsymbol{q}}_2 \), \( \hat{\boldsymbol{q}}_1^{T}\hat{\boldsymbol{q}}_2 \), \( \boldsymbol{q}^{T}\hat{\boldsymbol{q}}_1 \), and \( \boldsymbol{q}^{T}\hat{\boldsymbol{q}}_2 \). We now detail the proof of statement \ref{po3}:
		According to Formula \eqref{ra1}, we can easily derive:
		\begin{align}
			T(\hat{\boldsymbol{q}}_1) &= \left( \cos\theta \cdot \boldsymbol{1}_n^T \boldsymbol{q}_1 + \sin\theta \cdot \boldsymbol{1}_n^T \boldsymbol{q}_2 \right)^2 \nonumber\\
			&= \cos^2\theta \cdot (\boldsymbol{1}_n^T \boldsymbol{q}_1)^2 + \sin^2\theta \cdot (\boldsymbol{1}_n^T \boldsymbol{q}_2)^2 + 2\cos\theta \sin\theta \cdot \boldsymbol{1}_n^T \boldsymbol{q}_1 \cdot \boldsymbol{1}_n^T \boldsymbol{q}_2
		\end{align}
		Similarly, using Formula \eqref{ra2}, we obtain:
		\begin{align}
			T(\hat{\boldsymbol{q}}_2) = \sin^2\theta \cdot (\boldsymbol{1}_n^T \boldsymbol{q}_1)^2 + \cos^2\theta \cdot (\boldsymbol{1}_n^T \boldsymbol{q}_2)^2 - 2\cos\theta \sin\theta \cdot \boldsymbol{1}_n^T \boldsymbol{q}_1 \cdot \boldsymbol{1}_n^T \boldsymbol{q}_2
		\end{align}
		Therefore, we have:
		\begin{align}
			T(\hat{\boldsymbol{q}}_1) + T(\hat{\boldsymbol{q}}_2) &= (\cos^2\theta + \sin^2\theta) T(\boldsymbol{q}_1) + (\cos^2\theta + \sin^2\theta) T(\boldsymbol{q}_2)= T(\boldsymbol{q}_1) + T(\boldsymbol{q}_2)
		\end{align}
		The proof is complete.
	\end{proof}
	
	
	Another important consideration in the PER technique is determining the appropriate rotation angle. The primary goal of the rotation is to ensure that the eigenvector \( \hat{\boldsymbol{q}}_1 \) satisfies the constraint in Formula \eqref{del}. To achieve this, we choose the smallest rotation angle that satisfies the constraint. We let: 
	\begin{align}
		T(\hat{\boldsymbol{q}}_1) = \left( \boldsymbol{1}_n^T \boldsymbol{q}_1 \cos \theta + \boldsymbol{1}_n^T \boldsymbol{q}_2 \sin \theta \right)^2 = \delta
	\end{align}
	Consequently, using inverse trigonometric functions, we can easily obtain two feasible solutions for the rotation angle $\theta\in\left[-\pi/{2}, \pi/{2}\right]$:
	\begin{align}
		\label{theta}
		\theta = \arctan\left(\frac{-s_{1}s_{2}\pm \sqrt{\delta(s_{1}^2 + s_{2}^2-\delta)}}{s_{2}^{2}-\delta}\right)
	\end{align}
	where $s_{1} = \boldsymbol{1}_n^T \boldsymbol{q}_1$ and $s_{2} = \boldsymbol{1}_n^T \boldsymbol{q}_2$. Note that two possible angles in Formula \eqref{theta} satisfy the rotation requirement, corresponding to the \(\pm\) choices\footnote{Since the two selected eigenvectors satisfy the condition $T(\boldsymbol{q}_1)<\delta<T(\boldsymbol{q}_2)$, i.e., $s_{1}^{2}<\delta<s_{2}^{2}$, the existence of these two feasible solutions is guaranteed.}. Since we always prefer a smaller rotation, we select the angle with the smaller absolute value between the two angles.
	
	To summarize, we present the main steps of the PER technique in Algorithm \ref{PER}. It can be seen that the PER technique allows the application of a rotation angle to an eigenvector (i.e., \( \boldsymbol{q}_1 \)) associated with a weak factor that does not satisfy the constraint in Formula \eqref{del}, adjusting its deviation degree to the preset threshold.  Simultaneously, to maintain the orthogonality of the eigenvector matrix, we rotate the accompanying eigenvector (i.e., \( \boldsymbol{q}_2 \)) to reduce its deviation degree. In practice, we select the eigenvector with the largest deviation degree as the accompanying eigenvector.
	
	\begin{algorithm}[!htb]\small
		\caption{PER($\boldsymbol{q}_{1},\boldsymbol{q}_{2},\delta$)} 
		\label{PER}
		\begin{algorithmic}[1]
			\Require $\delta$: a threshold for the minimum deviation. $\boldsymbol{q}_{1}, \boldsymbol{q}_{2}$: two eigenvectors such that \( T(\boldsymbol{q}_1) < \delta < T(\boldsymbol{q}_2) \) and \( T(\boldsymbol{q}_1) + T(\boldsymbol{q}_2) > 2\delta \).
			\Ensure $\hat{\boldsymbol{q}}_1, \hat{\boldsymbol{q}}_2$: two new eigenvectors after rotation.
			\State Let $T(\boldsymbol{q}_1) = \delta$, and solve for the two possible rotation angles using the inverse function:
			\[
			\theta_1 = \arctan\left(\frac{-s_{1}s_{2}+ \sqrt{\delta(s_{1}^2 + s_{2}^2-\delta)}}{s_{2}^{2}-\delta}\right)\quad {\rm and}\quad \theta_2 = \arctan\left(\frac{-s_{1}s_{2}-\sqrt{\delta(s_{1}^2 + s_{2}^2-\delta)}}{s_{2}^{2}-\delta}\right),
			\]
			where $s_{1} = \boldsymbol{1}_n^T \boldsymbol{q}_1$ and $s_{2} = \boldsymbol{1}_n^T \boldsymbol{q}_2$.
			\State Select an rotation angle as the one with the smallest absolute value between the above two angles:
			\begin{align*}
				\theta = \left\{\begin{aligned}
					&\theta_1,\quad \textrm{if}\quad  |\theta_1| \le |\theta_2|;\\
					&\theta_2,\quad \textrm{if}\quad |\theta_1| > |\theta_2|.\\
				\end{aligned}
				\right.
			\end{align*}
			\State Rotate $\boldsymbol{q}_1$ and $\boldsymbol{q}_2$ with respect to angle $\theta$:
			\[
			\hat{\boldsymbol{q}}_1 = \cos \theta\cdot\boldsymbol{q}_1  + \sin \theta\cdot\boldsymbol{q}_2
			\quad {\rm and} \quad
			\hat{\boldsymbol{q}}_2 = -\sin \theta\cdot\boldsymbol{q}_1 + \cos \theta\cdot\boldsymbol{q}_2.
			\]
			\State \textbf{Return:} $\hat{\boldsymbol{q}}_1$ and $\hat{\boldsymbol{q}}_2$.
		\end{algorithmic}
	\end{algorithm}
	
	Next, we introduce the implementation of the ERSE method, which constructs the covariance matrix estimator by incorporating the PER technique as its core mechanism. The key steps of the ERSE algorithm are summarized in Algorithm \ref{ERSE}. First, a spectral decomposition of the sample correlation matrix is performed to obtain the sample eigenvectors. The initial estimators for the \( n \) eigenvectors are then set to the sample eigenvectors. Next, the following iterative procedure is carried out until all the \( n \) eigenvectors satisfy the constraint in Formula \eqref{del}: identify the eigenvector with the smallest deviation degree, pair it with the eigenvector with the largest deviation degree, and apply the PER operator to rotate these two eigenvectors. Finally, using the reconstructed eigenvector estimators, the covariance matrix estimator is constructed within the rotation-equivariant estimation framework.
	\begin{algorithm}[!htb]\small
		\caption{ERSE($R^{S},D^{S},\delta$)}
		\label{ERSE}
		\begin{algorithmic}[1]
			\Require $R^{S}$: the sample correlation matrix; $D^{S}$: the diagonal matrix with the sample standard deviations on the diagonal; $\delta$: a threshold for the minimum deviation.
			\Ensure $\hat{\Sigma}$: the estimation of covariance matrix. 
			\State Perform the spectral decomposition on the correlation matrix $R^{S}$ to obtain the $n$ sample eigenvectors $Q^{S}=(\boldsymbol{q}_{1}^{S},\ldots,\boldsymbol{q}_{n}^{S})$.
			\State Initialize the estimators of the $n$ sample eigenvectors and the subscript of the eigenvector to be rotated:
			\[
			\hat{\boldsymbol{q}}_{i} \leftarrow \boldsymbol{q}_{i}^{S}, \quad i = 1,\ldots,n \quad {\rm and}\quad i_{\min} \leftarrow \arg\min_{1\le i \le n}T(\hat{\boldsymbol{q}}_{i}).
			\]
			\While{$T(\boldsymbol{q}_{i_{\min}})<\delta$}
			\State Select the accompanying eignvector with the largest deviation: 
			\[
			i_{\max} = \arg\max_{1\le i \le n}T(\hat{\boldsymbol{q}}_{i}).
			\]
			\State Apply the PER operator to the eigenvectors \( \boldsymbol{q}_{i_{\min}} \) and \( \boldsymbol{q}_{i_{\max}} \):
			\[
			\hat{\boldsymbol{q}}_{i_{\min}},\hat{\boldsymbol{q}}_{i_{\max}}\leftarrow {\rm PER}(\hat{\boldsymbol{q}}_{i_{\min}},\hat{\boldsymbol{q}}_{i_{\max}},\delta).
			\]
			\State Update the subscript of the eigenvector to be rotated:
			\[
			i_{\min} \leftarrow \arg\min_{1\le i \le n}T(\hat{\boldsymbol{q}}_{i})
			\]
			\EndWhile
			\State Reconstruct the estimated eigenvalues \( \hat{\boldsymbol{\lambda}}=(\hat{\lambda}_{1},\ldots,n)^{T} \) using the rotated eigenvectors:
			\[
			\hat{\lambda}_{i} = \hat{\boldsymbol{q}}_{i}^{T} R^{S} \hat{\boldsymbol{q}}_{i},\quad i=1,\ldots,n.
			\]
			\State Reconstruct the estimated covariance matrix:
			\[
			\hat{\Sigma} = D^{S} Q^{S}{\rm Diag}(\hat{\boldsymbol{\lambda}}) (Q^{S})^{T} D^{S}.
			\]
			\State \textbf{Return:} $\hat{\Sigma}$.
		\end{algorithmic}
	\end{algorithm}
	
	Finally, we discuss the computational cost of the proposed ERSE. The computational complexity of the PER technique is cheap since it mainly depends on a simple inverse trigonometric process. In the ERSE algorithm, the PER technique is applied a maximum of \(n-1\) times. This limitation arises because the eigenvector associated with the largest eigenvalue consistently exceeds the specified threshold. In extreme cases, the PER technique may be applied once to each of the remaining \(n-1\) eigenvectors. Therefore, the proposed ERSE method exhibits excellent computational efficiency and is well-suited for high-dimensional assets.
	
	\section{Relation to Existing Methods}
	\label{rela}
	As previously mentioned, the proposed ERSE method can be regarded as a modified approach to covariance matrix shrinkage estimation. It shares strong connections with rotation-equivariant estimation techniques discussed in the existing literature, such as \cite{Ledoit2004, Ledoit2012, Ledoit2017,Ledoit2020, Shi2020, Nguyen2022}. 
	
	Overall, the proposed ERSE method distinguishes itself from existing approaches in two main ways. First, the ERSE method is specifically designed for covariance matrix estimation of positively correlated assets, while previous methods do not explicitly target this scenario. This focus on ERSE provides an advantage in this specific context, although it has limitations in broader applicability. Second, unlike existing methods that directly shrink the sample covariance matrix, the ERSE method focuses on shrinking the sample correlation matrix. The correlation matrix is derived from the covariance matrix by standardizing the covariances, offering the advantage of providing a unit-free measure. It is important to clarify that we do not claim that shrinking the correlation matrix is inherently superior to shrinking the covariance matrix. Instead, we emphasize that focusing on the correlation matrix provides a more convenient analytical path for positively correlated assets.
	
	Next, we detail the technical differences and similarities between the ERSE method and classical rotation-equivariant estimation approaches. It is important to note that most existing shrinkage methods focus on the covariance matrix, while our approach targets the correlation matrix. In the following discussion, we use $\Sigma^S$ to denote the sample covariance matrix when referring to other methods and the sample correlation matrix in the context of our proposed method. Similarly, $\Sigma$ will represent the true covariance matrix or the true correlation matrix, depending on the context. We will not explain the distinction between the two matrices but will focus on analyzing their eigenvalues and eigenvectors. 
	
	Within the rotation-equivariant estimation framework, \cite{Ledoit2022} provides an optimal eigenvalue estimator that relies on the true covariance matrix:
	\begin{align}
		\label{opt}
		\hat{\lambda}_{i}^{*} = (\boldsymbol{q}_{i}^{S})^{T} \Sigma \boldsymbol{q}_{i}^{S},\quad i=1,\ldots,n
	\end{align}
	The estimator outlined in Formula \eqref{opt} relies on the unobservable matrix $\Sigma$, which is not available in practice. However, it offers valuable theoretical insights and serves as a useful reference for constructing practical estimators. Starting from the sample covariance matrix, the eigenvalues of this textbook estimator can be rewritten as ${\lambda}_{i}^{S} = (\boldsymbol{q}_{i}^{S})^{T} \Sigma^S \boldsymbol{q}_{i}^{S},\quad i=1,\ldots,n.$
	
	A class of linear shrinkage estimation methods proposes replacing the sample covariance matrix $\Sigma^S$ with a linear combination of $\Sigma^S$ and a target matrix. For instance, \cite{Ledoit2004} suggests using the identity matrix as the target matrix, while \cite{Ledoit2003} advocates for a target matrix derived from the capital asset pricing model. In contrast, the proposed ERSE method retains the sample covariance matrix $\Sigma^S$ and instead enhances the robustness of the estimator by modifying the eigenvectors: $\hat{\lambda}_{i} = (\hat{\boldsymbol{q}}_{i})^{T} \Sigma^S \hat{\boldsymbol{q}}_{i},\quad i=1,\ldots,n$, where the estimated eigenvectors, i.e., $\hat{\boldsymbol{q}}_{1},\ldots,\hat{\boldsymbol{q}}_{n}$, are obtained using Algorithm \ref{ERSE}. It is clear that if \( \delta \le \min_{1 \le i \le n} T(\boldsymbol{q}_{i}^{S}) \), the output of the proposed ERSE method will revert to the sample covariance matrix as no modifications are made to the sample eigenvectors in this case.
	
	The proposed ERSE method also exhibits a linear shrinkage pattern similar to that presented in \cite{Ledoit2004}, with the specific details outlined in the following proposition.
	\begin{proposition}
		\label{line}
		Based on the rotation process defined by Formulas \eqref{ra1} and \eqref{ra2}, the two eigenvalues exhibit the following linear relationship:
		\begin{align}
			\hat{\lambda}_{1} = \gamma \lambda_{1} + (1-\gamma) \lambda_{2} \quad {\rm and}\quad \hat{\lambda}_{2} = (1-\gamma) \lambda_{1} + \gamma \lambda_{2}, \quad {\rm where} \quad \gamma = \cos^{2} \theta \in [0, 1]
		\end{align}
	\end{proposition}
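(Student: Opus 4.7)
The plan is to substitute the rotation formulas (\ref{ra1}) and (\ref{ra2}) into the definition $\hat{\lambda}_{i} = \hat{\boldsymbol{q}}_{i}^{T} R^{S} \hat{\boldsymbol{q}}_{i}$ used by the ERSE algorithm, and then exploit the eigenvector property of $\boldsymbol{q}_1, \boldsymbol{q}_2$ together with their orthogonality to collapse all cross terms. The only facts I need are: (i) $R^{S}\boldsymbol{q}_1 = \lambda_1 \boldsymbol{q}_1$ and $R^{S}\boldsymbol{q}_2 = \lambda_2 \boldsymbol{q}_2$; (ii) $\boldsymbol{q}_1^T\boldsymbol{q}_1 = \boldsymbol{q}_2^T\boldsymbol{q}_2 = 1$ and $\boldsymbol{q}_1^T\boldsymbol{q}_2 = 0$; and (iii) the Pythagorean identity $\cos^2\theta + \sin^2\theta = 1$.

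First I would expand
\begin{align*}
\hat{\lambda}_1 = (\cos\theta \cdot \boldsymbol{q}_1 + \sin\theta \cdot \boldsymbol{q}_2)^{T} R^{S} (\cos\theta \cdot \boldsymbol{q}_1 + \sin\theta \cdot \boldsymbol{q}_2),
\end{align*}
apply $R^S$ on the right using (i), and then take inner products using (ii). The diagonal terms give $\cos^2\theta \cdot \lambda_1$ and $\sin^2\theta \cdot \lambda_2$, while both cross terms carry a factor $\boldsymbol{q}_1^T\boldsymbol{q}_2 = 0$ and therefore vanish. Setting $\gamma = \cos^2\theta$ yields $\hat{\lambda}_1 = \gamma\lambda_1 + (1-\gamma)\lambda_2$. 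Since $\theta \in [-\pi/2,\pi/2]$, clearly $\gamma \in [0,1]$, so this is a genuine convex combination.

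Next, the same computation applied to $\hat{\boldsymbol{q}}_2 = -\sin\theta\cdot\boldsymbol{q}_1 + \cos\theta\cdot\boldsymbol{q}_2$ produces $\hat{\lambda}_2 = \sin^2\theta\cdot\lambda_1 + \cos^2\theta\cdot\lambda_2 = (1-\gamma)\lambda_1 + \gamma\lambda_2$, with the sign on the cross term again neutralized by the orthogonality of $\boldsymbol{q}_1$ and $\boldsymbol{q}_2$. As a cross-check, summing the two expressions recovers $\hat{\lambda}_1 + \hat{\lambda}_2 = \lambda_1 + \lambda_2$, which is consistent with the trace-preserving nature of the orthogonal rotation and with Proposition \ref{sumT}(\ref{po3}) applied to eigenvalues rather than deviation degrees.

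There is no real obstacle here; the statement is essentially a direct algebraic consequence of the rotation being an orthogonal change of basis within the two-dimensional invariant subspace spanned by $\boldsymbol{q}_1$ and $\boldsymbol{q}_2$. The only minor subtlety worth flagging in the write-up is that the argument tacitly treats $\boldsymbol{q}_1, \boldsymbol{q}_2$ as eigenvectors of $R^S$ (so that the cross terms kill), which is exactly the setting in which the PER technique is deployed within Algorithm \ref{ERSE}; the conclusion then follows immediately.
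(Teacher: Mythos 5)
Your proposal is correct and follows essentially the same route as the paper: substitute the rotation formulas into $\hat{\lambda}_i = \hat{\boldsymbol{q}}_i^T R^S \hat{\boldsymbol{q}}_i$, expand, and use the eigenvector relations together with orthonormality to eliminate the cross terms. You are in fact slightly more explicit than the paper, which drops the cross terms without comment; your remark that they vanish because $\boldsymbol{q}_1^T R^S \boldsymbol{q}_2 = \lambda_2\,\boldsymbol{q}_1^T\boldsymbol{q}_2 = 0$ is the right justification.
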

	
	\begin{proof}
		According to Formulas \eqref{ra1} and \eqref{ra2}, we have
		\begin{align}
			\nonumber
			\hat{\lambda}_1 &= \hat{\boldsymbol{q}}_1^{T} R^{S} \hat{\boldsymbol{q}}_1 = ( \cos\theta\cdot\boldsymbol{q}_1 + \sin \theta\cdot\boldsymbol{q}_2)^{T}R^{S}(\cos \theta\cdot\boldsymbol{q}_1  + \sin \theta\cdot\boldsymbol{q}_2 ) \\
			& = \cos^{2}\theta\cdot\boldsymbol{q}_1^{T}R^{S}\boldsymbol{q}_1 + \sin^{2}\theta\cdot\boldsymbol{q}_2^{T}R^{S}\boldsymbol{q}_2 = \cos^{2}\theta\cdot\lambda_{1} + \sin^{2}\theta\cdot\lambda_{2}
		\end{align}
		Similarly, we have: $\hat{\lambda}_2 = \sin^{2}\theta\cdot\lambda_{1} + \cos^{2}\theta\cdot\lambda_{2}$. The proof is complete.
	\end{proof}

	
	\cite{Ledoit2004} proposes a shrinkage function of the form \( \hat{\lambda}_{i}^{LW} = \gamma_{LW} \lambda_{i} + (1-\gamma_{LW}) \mu \), where \( \mu \) is the mean of all sample eigenvalues. Both the proposed ERSE method and the method in \cite{Ledoit2004} operate within a linear framework that reduces the disparity between eigenvalues while preserving the sum of all eigenvalues. However, there are two critical differences between the two approaches: First, in \cite{Ledoit2004}, the shrinkage direction for all eigenvalues is uniformly set to the mean \( \mu \), whereas the ERSE method selects two eigenvalues with the largest deviations as targets for mutual shrinkage. Second, while \cite{Ledoit2004} applies a uniform shrinkage coefficient \( \gamma_{LW} \) to all eigenvalues, the ERSE method uses different shrinkage coefficients for each eigenvalue, with the values determined by the deviation of the corresponding eigenvectors. From the perspective of assigning different shrinkage coefficients to different eigenvalues, the ERSE method is more similar to the nonlinear shrinkage techniques, such as those in \cite{Ledoit2012} and \cite{Ledoit2017}.
	
	In addition, from the portfolio perspective, the proposed ERSE method is related to the $\ell_2$-norm constraint discussed in \cite{DeMiguel2009}. \cite{DeMiguel2009} introduces the concept of incorporating an \( \ell_2 \)-norm constraint into the GMV portfolio model to enhance the stability of the portfolio structure. From an economic perspective, the eigenvectors can be viewed as the portfolio vectors of $n$ virtual funds, with their corresponding eigenvalues representing the risk associated with the returns of these funds. Building on this understanding, we present the following proposition to illustrate the relationship between the ERSE method and the \( \ell_2 \)-norm constraint.
	\begin{proposition}
		\label{norm2}
		Given an eigenvector \( \boldsymbol{q}_i \), we define its corresponding unit-cost portfolio \( \boldsymbol{w}(\boldsymbol{q}_i) = \boldsymbol{q}_i / ( \boldsymbol{1}_n^T \boldsymbol{q}_i) \), where the denominator \( \boldsymbol{1}_n^T \boldsymbol{q}_i \) is assumed to be nonzero. The minimum deviation constraint \( T(\boldsymbol{q}_i) \geq \delta \) is equivalent to imposing an $\ell_2$-norm constraint on the unit-cost portfolio \( \boldsymbol{w}(\boldsymbol{q}_i) \), i.e.,
		\begin{align}
			T(\boldsymbol{q}_i) \geq \delta \quad \Leftrightarrow \quad \| \boldsymbol{w}(\boldsymbol{q}_i) \|_2 \leq \frac{1}{\sqrt{\delta}}.
		\end{align}
	\end{proposition}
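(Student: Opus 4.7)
The plan is direct: compute $\|\boldsymbol{w}(\boldsymbol{q}_i)\|_2^2$ explicitly in terms of $T(\boldsymbol{q}_i)$ using the definitions, and then the claimed equivalence reduces to an elementary monotone transformation.

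First I would substitute the definition of the unit-cost portfolio into the squared $\ell_2$-norm. Writing $s_i := \boldsymbol{1}_n^T \boldsymbol{q}_i$ (assumed nonzero, as stated in the proposition), one has
\begin{align}
\|\boldsymbol{w}(\boldsymbol{q}_i)\|_2^2 = \boldsymbol{w}(\boldsymbol{q}_i)^T \boldsymbol{w}(\boldsymbol{q}_i) = \frac{\boldsymbol{q}_i^T \boldsymbol{q}_i}{s_i^2}.
\end{align}
Since $\boldsymbol{q}_i$ is a unit eigenvector (Assumption~\ref{ass_c} via the spectral decomposition gives $\boldsymbol{q}_i^T \boldsymbol{q}_i = 1$), and by Definition~\ref{Tq} we have $T(\boldsymbol{q}_i) = s_i^2$, this simplifies to
\begin{align}
\|\boldsymbol{w}(\boldsymbol{q}_i)\|_2^2 = \frac{1}{T(\boldsymbol{q}_i)}.
\end{align}

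Next, I would chain the biconditionals. Since $T(\boldsymbol{q}_i) > 0$ (because $s_i \neq 0$) and $\delta > 0$, the map $x \mapsto 1/x$ is strictly decreasing on the positive reals, so
\begin{align}
T(\boldsymbol{q}_i) \geq \delta \;\Longleftrightarrow\; \frac{1}{T(\boldsymbol{q}_i)} \leq \frac{1}{\delta} \;\Longleftrightarrow\; \|\boldsymbol{w}(\boldsymbol{q}_i)\|_2^2 \leq \frac{1}{\delta} \;\Longleftrightarrow\; \|\boldsymbol{w}(\boldsymbol{q}_i)\|_2 \leq \frac{1}{\sqrt{\delta}},
\end{align}
where the final step uses that the $\ell_2$-norm is nonnegative and the square root is monotone. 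This closes both directions of the equivalence.

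There is no real obstacle here; the statement is essentially a reformulation of Definition~\ref{Tq} combined with the unit-norm property of eigenvectors. The only small caveat to flag in the writeup is the nonzero denominator condition $s_i \neq 0$, which is already assumed in the proposition statement and is exactly what prevents $T(\boldsymbol{q}_i) = 0$; without it, the unit-cost portfolio is undefined, and neither side of the equivalence is meaningful for $\delta > 0$.
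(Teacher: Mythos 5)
Your proof is correct and follows essentially the same route as the paper: compute \( \|\boldsymbol{w}(\boldsymbol{q}_i)\|_2^2 = 1/T(\boldsymbol{q}_i) \) using the unit-norm property of the eigenvector, then invert the inequality. Your version is if anything slightly more careful, since you explicitly note that \( T(\boldsymbol{q}_i) > 0 \) (from \( s_i \neq 0 \)) is what licenses taking reciprocals.
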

	\begin{proof}
		Based on the Definition \ref{Tq}, we have 
		\begin{align}
			\|\boldsymbol{w}(\boldsymbol{q}_i) \|_2^{2} = \left(\frac{\boldsymbol{q}_i}{\boldsymbol{1}_n^T \boldsymbol{q}_i}\right)^{T}\cdot \frac{\boldsymbol{q}_i}{\boldsymbol{1}_n^T \boldsymbol{q}_i} = \frac{\boldsymbol{q}_i^{T}\boldsymbol{q}_i}{\boldsymbol{1}_n^T \boldsymbol{q}_i \boldsymbol{q}_i^{T}\boldsymbol{1}_n} = \frac{1}{T(\boldsymbol{q}_i)}
		\end{align}
		Since $\|\boldsymbol{w}(\boldsymbol{q}_i) \|_2\ge 0$ and $T(\boldsymbol{q}_i)\ge 0$, we have
		\begin{align}
			T(\boldsymbol{q}_i) \geq \delta \quad \Leftrightarrow \quad \| \boldsymbol{w}(\boldsymbol{q}_i) \|_2^{2} \leq \frac{1}{\delta} \quad \Leftrightarrow \quad \| \boldsymbol{w}(\boldsymbol{q}_i) \|_2 \leq \frac{1}{\sqrt{\delta}}
		\end{align}
		The proof is complete.
	\end{proof}
	
	
	According to the above proposition, we observe that imposing a deviation constraint on the eigenvector is equivalent to imposing an \( \ell_2 \)-norm constraint on its corresponding unit-cost portfolio. It follows that the deviation degree \( T(\boldsymbol{q}_{i}) \) defined in Formula \eqref{Tq} is inversely proportional to the \( \ell_2 \)-norm of the unit-cost portfolio corresponding to \( \boldsymbol{q}_i \). 
	
	\section{Empirical Results}
	\label{res}
	To evaluate the out-of-sample performance of the proposed ERSE method, we compare it with state-of-the-art rotation-equivariant estimators from the existing literature. We use the ten datasets of factor-sorted portfolios discussed in Section \ref{PosCor}, all of which exhibit persistently positive correlation matrices. In the comparison step, we focus on estimating the Global Minimum Variance (GMV) portfolio without imposing short-selling constraints. This approach provides an effective means to evaluate the quality of the covariance matrix estimator, a common method in the experimental part of the covariance matrix estimation literature \citep{Ledoit2017, Shi2020, Cui2024}. The optimal portfolio of GMV model is formulated as:
	\begin{equation}
		\label{wgmv}
		\boldsymbol{w}^{*}(\Sigma) = \frac{\Sigma^{-1}\boldsymbol{1}_{n}}{\boldsymbol{1}_{n}^{T}\Sigma^{-1} \boldsymbol{1}_{n}}
	\end{equation} 
	In practice, the natural approach is replacing the unknown value of true covariance matrix \( \Sigma \) with an estimator \( \hat{\Sigma} \) in equation \eqref{wgmv} to obtain a feasible portfolio.

	\subsection{Alternatives Methods}
	In our analysis, we consider ten alternative strategies for the ERSE method, outlined as follows:
	
	\begin{enumerate}
		\item \textbf{EW:} the equally weighted portfolio promoted by \cite{DeMiguel2009a}.  
		\item \textbf{Sample:} the GMV portfolio where $\hat{\Sigma}_{t}$ is given by the sample covariance matrix. 
		\item \textbf{LIN1P:} the linear shrinkage towards a two-parameter matrix; all variances are identical, and all covariances are zero (See \cite{Ledoit2004}).
		\item \textbf{LIN2P:} the linear shrinkage towards a two-parameter matrix; all variances are identical, and all covariances are the same (See Appendix B.1 of \cite{Ledoit1995}).
		\item \textbf{LINC:} the linear shrinkage towards a constant-correlation matrix; the target preserves the diagonal of the sample covariance matrix, and all correlation coefficients are identical (See \cite{Ledoit2004a}).
		\item \textbf{LIND:} the linear shrinkage towards a diagonal matrix; the target preserves the diagonal of the sample covariance matrix, with all covariances set to zero (See Appendix B.2 of \cite{Ledoit1995}).
		\item \textbf{LINM:} the linear shrinkage towards a one-factor market model, where the factor is defined as the cross-sectional average of all random variables. Due to the idiosyncratic volatility of the residuals, the target preserves the diagonal of the sample covariance matrix (See \cite{Ledoit2003}).
		\item \textbf{LIS:} the nonlinear shrinkage derived under Stein’s loss, known as linear-inverse shrinkage (See Section 3 of \cite{Ledoit2022a}).
		\item \textbf{QIS:} the nonlinear shrinkage derived under Frobenius loss and its two variants—Inverse Stein’s loss and Minimum Variance loss—referred to as quadratic-inverse shrinkage (See Section 4.5 of \cite{Ledoit2022a}).
		\item \textbf{GIS:} The nonlinear shrinkage derived under the Symmetrized Kullback-Leibler loss. This method can be viewed as geometrically averaging linear-inverse shrinkage (LIS) with quadratic-inverse shrinkage (QIS) (See Remark 4.3 of \cite{Ledoit2022a}).
	\end{enumerate}
	
	The above algorithms covers a range of linear and nonlinear shrinkage techniques. These methods are implemented using the publicly available ``covShrinkage'' project provided by Ledoit's official repository\footnote{For more details, please refer to: https://github.com/oledoit/covShrinkage}. Default values are used for all parameters to ensure consistency across the evaluated methods. A key parameter in the ERSE estimation method is the minimum threshold for the deviation degree, denoted \(\delta\). In this study, we set \(\delta = 0.25\), which is empirically determined and has proven effective in most cases. To demonstrate that the choice of \(\delta\) is not entirely post hoc, we also examine the strategy's performance under different \(\delta\) values in subsequent analyses. Further details can be found in Subsection \ref{ddr}.
	
	We use the rolling window approach to evaluate portfolio performance. First, an estimation window is defined to perform the necessary computations. Following precedents in the literature, such as \cite{DeMiguel2009} and \cite{Shi2020}, this study adopts an estimation window of $L=120$ data points, corresponding to ten years of monthly data. At each time step, different covariance matrix estimators and portfolio weights are computed using the historical return data from the most recent $L$ months. The rolling window approach is then repeated by incorporating data from the subsequent month while excluding data from the earliest month. This iterative process continues until the end of the data set is reached. Consequently, $T-L$ portfolio returns are generated for each portfolio strategy, where $T$ is the total number of periods in the dataset.
	
	\subsection{Out-of-Sample Variance}
	\label{res_var}
	The most crucial performance metric for the GMV portfolio is the out-of-sample variance, which directly quantifies the portfolio's risk. The GMV model aims to minimize portfolio variance (or standard deviation) without considering expected returns. Therefore, when evaluating a GMV strategy, the primary focus should be its effectiveness in minimizing risk. In this context, out-of-sample variance serves as a quantitative measure for assessing the covariance matrix estimator, with lower variance indicating a better covariance matrix estimator \citep{Ledoit2017, Shi2020}. We also assess whether one portfolio has a statistically significantly lower out-of-sample variance than another. Our analysis is limited to comparisons between the ERSE method and other alternative approaches to mitigate the risk of multiple testing issues. For each scenario, the $p$ value for the null hypothesis of equal variance is computed using the stationary bootstrap method proposed by \cite{Ledoit2011}, with bootstrap samples set to $B=1,000$ and a block size of $b=5$. Table \ref{var} presents the experimental results on out-of-sample risk.
	\begin{table}[!htb]\small
		\centering
		\caption{Out-of-Sample Variance}\label{var}
		\resizebox{\textwidth}{!}{
			\begin{tabular}{lllllllllll}
				\toprule
				& 30IN & 49IN & 100SB & 100SI & 100SO & 150MD & 200MD & 300MD & 500MD & 650MD \\
				\midrule
				EW  & 22.55$^{***}$ & 23.02$^{***}$ & 27.15$^{***}$ & 26.40$^{***}$ & 26.68$^{***}$ & 23.48$^{***}$ & 26.72$^{***}$ & 26.70$^{***}$ & 26.68$^{***}$ & 25.86$^{***}$ \\
				SAM & 13.50$^{**}$ & 14.02$^{***}$ & 18.77$^{***}$ & 17.88$^{***}$ & 18.77$^{***}$ & NA & NA & NA & NA & NA \\
				LIN1P & 12.99$^{*}$ & 12.93$^{***}$ & 15.77$^{***}$ & 14.88$^{***}$ & 14.86$^{***}$ & 13.54$^{*}$ & 8.77 & 17.01$^{***}$ & 10.11$^{***}$ & 10.37$^{***}$ \\
				LIN2P & 12.86$^{*}$ & 12.66$^{***}$ & 16.00$^{***}$ & 15.12$^{***}$ & 15.02$^{***}$ & 13.78$^{**}$ & 9.98$^{*}$ & 17.34$^{***}$ & 10.92$^{***}$ & 11.05$^{***}$ \\
				LINC & 12.80 & 12.71$^{**}$ & 14.58$^{*}$ & 14.30$^{***}$ & 14.19$^{**}$ & 12.73 & 7.91 & 13.26$^{**}$ & 8.01$^{***}$ & 8.14$^{***}$ \\
				LIND & 13.04$^{*}$ & 13.10$^{***}$ & 15.84$^{***}$ & 14.82$^{***}$ & 15.02$^{***}$ & 13.65$^{**}$ & 8.81 & 17.14$^{***}$ & 10.00$^{***}$ & 10.29$^{***}$ \\
				LINM & 12.94 & 12.55$^{**}$ & 15.94$^{***}$ & 14.78$^{***}$ & 14.79$^{***}$ & 14.07$^{**}$ & 9.26$^{*}$ & 15.78$^{***}$ & 9.09$^{***}$ & 9.51$^{***}$ \\
				GIS & 13.03$^{**}$ & 12.65$^{***}$ & 14.34 & 13.64$^{**}$ & 13.34 & NA & NA & NA & NA & NA \\
				LIS & 13.02$^{**}$ & 12.66$^{***}$ & 14.23 & 13.54$^{**}$ & 13.31 & NA & NA & NA & NA & NA \\
				QIS & 13.03$^{**}$ & 12.65$^{***}$ & 14.47 & 13.74$^{**}$ & 13.42 & 14.31$^{**}$ & 8.94$^{**}$ & 12.57$^{*}$ & 8.28$^{***}$ & 9.05$^{***}$ \\
				ERSE & \textbf{12.28} & \textbf{11.93} & \textbf{13.38} & \textbf{12.25} & \textbf{12.57} & \textbf{11.96} & \textbf{6.96} & \textbf{11.05} & \textbf{6.98} & \textbf{7.04} \\
				\bottomrule
			\end{tabular}
		}
	\end{table}
	
	The results of the out-of-sample variance comparisons among different strategies highlight the superior performance of the ERSE method in risk management. Specifically, the GMV strategy based on the ERSE method achieves the lowest variance across all datasets, with statistically significant results in most cases. This demonstrates the effectiveness of the ERSE method in improving covariance matrix estimation accuracy and reducing portfolio risk. All shrinkage estimation strategies outperform the traditional equally weighted (EW) and sample average (SAM) strategies, with this advantage becoming more pronounced as the number of assets increases. These results support the well-established effectiveness of shrinkage estimation techniques in covariance matrix estimation for high-dimensional assets. Among the linear shrinkage methods, the LINC method performs best. The LINC method leverages the ideas of the CAPM to capture consistent trends across assets, giving it an advantage over other linear shrinkage estimators in the context of positively correlated assets. The ERSE method achieves an average risk reduction of 10.52\% relative to LINC across all 10 datasets, with statistically significant improvements observed in 7 of the 10 datasets. Compared to the QIS method, representative of nonlinear shrinkage techniques, the ERSE method achieves an average risk reduction of 12.46\% across all datasets. The above empirical results highlight that the ERSE estimator outperforms other shrinkage techniques in reducing out-of-sample portfolio risk, demonstrating its advantage in decision scenarios characterized by positively correlated assets.
	
	It is important to note that the out-of-sample performance reported here depends on the selected dataset. We do not claim that the ERSE method will consistently outperform all benchmark strategies in every setting. However, the experimental results provide strong evidence that the ERSE method significantly improves covariance matrix estimation for factor-sorted portfolios, demonstrating its effectiveness in decision-making environments characterized by positively correlated assets.
	
	\subsection{Different Thresholds for Minimum Deviation Degree}
	\label{ddr}
	
	The experimental results discussed in the previous section were obtained with \( \delta = 0.25 \). Since \( \delta \) is a critical hyperparameter for the ERSE method, evaluating its out-of-sample performance over different values of \( \delta \) is important. \( \delta \) is allowed to vary between 0 and 1. In particular, when \( \delta = 0 \), the ERSE strategy transforms into the SAM strategy, which is unsuitable for high-dimensional asset portfolios. Therefore, we set a lower bound for \( \delta \) at 0.05 and study its influence on the strategy's performance within the interval \( [0.05, 1] \). We also present the out-of-sample risk for other comparative strategies in Figure \ref{difS}, although their performance remains unaffected by changes in \( \delta \).
	\begin{figure}[!htb]
		\centering
		\includegraphics[width=\linewidth]{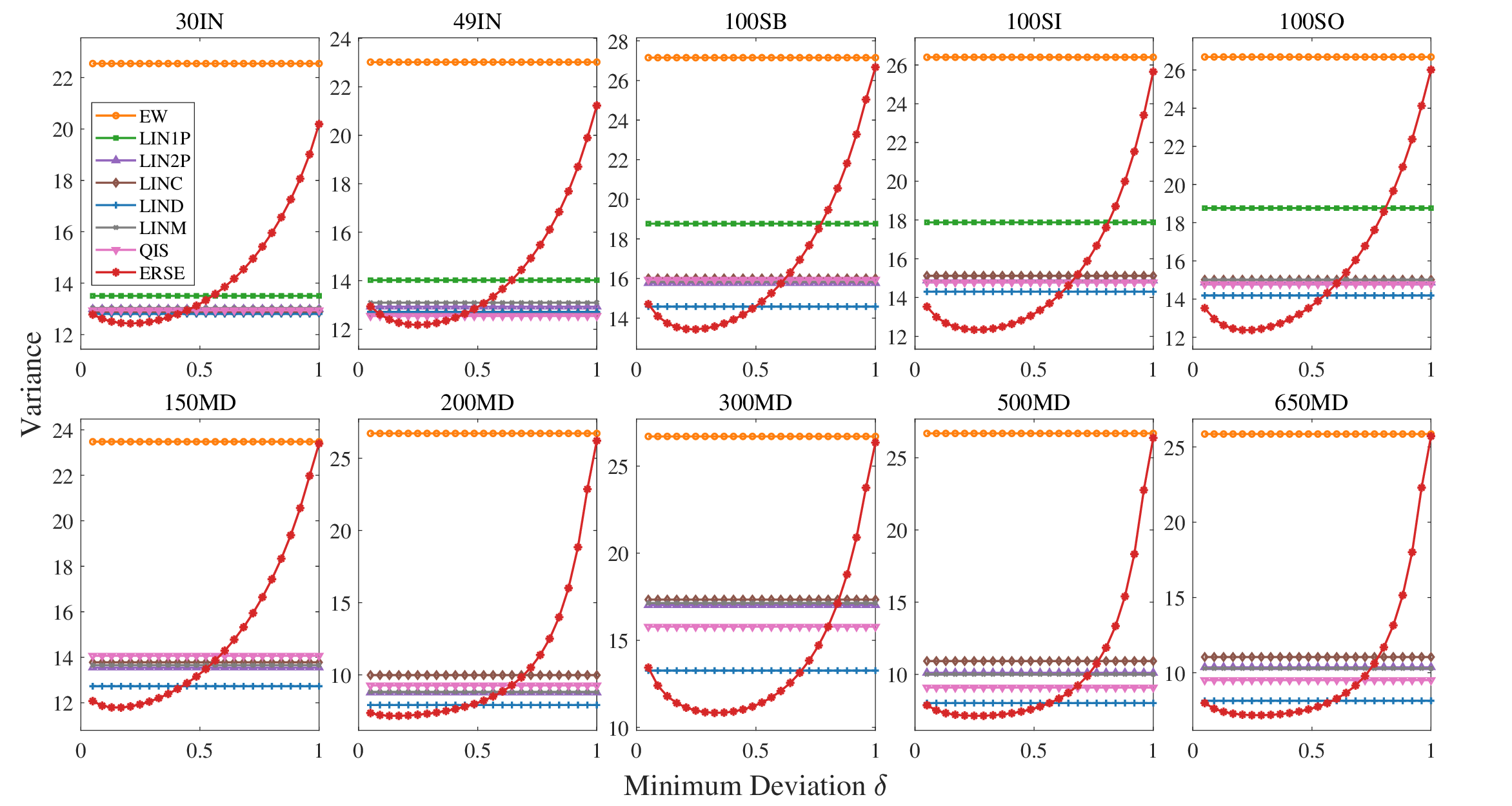}
		\caption{Out-of-Sample Variance of Strategies across Varying Threshold of $\delta$}
		\label{difS}
	\end{figure}
	
	As shown in Figure \ref{difS}, the out-of-sample risk of the ERSE strategy exhibits a non-monotonic relationship with \( \delta \), initially decreasing and then increasing as \( \delta \) rises. When \( \delta \) is at its minimum value of 0.05, the ERSE strategy performs well on all ten datasets, demonstrating the substantial improvement in out-of-sample performance achieved by a modest rotation of the sample eigenvectors. As \( \delta \) increases, the ERSE strategy’s out-of-sample performance of the ERSE strategy improves progressively, peaking when \( \delta \) falls between 0.15 and 0.35. Further increases in \( \delta \) lead to a sharp rise in out-of-sample risk, ultimately resulting in a performance that lags slightly behind the EW strategy. In particular, the ERSE strategy outperforms the competing strategies when \( \delta \) is set within a moderate range, and this range expands as the dimensionality of the asset space increases. A practical and reliable range for the parameter setting has been identified as \( \delta \in [0.15, 0.35] \), which we recommend as the optimal parameter interval. In conclusion, the ERSE strategy shows an advantage in determining the optimal shrinkage coefficient. By choosing \( \delta\) within this appropriate range, superior out-of-sample performance can be consistently achieved across a wide range of datasets for positively correlated assets.
	
	\subsection{Stability of the Estimated Covariance Matrix and Portfolios}
	We evaluate the stability of the proposed ERSE strategy by examining the condition numbers of the covariance matrix estimators and the portfolio weights derived from the GMV models. The condition number of the covariance matrix is a crucial indicator of the numerical stability associated with its inverse \citep{Won2013}. Mathematically, it is measured by the ratio of the largest eigenvalue to the smallest eigenvalue. A high condition number indicates an ill-conditioned matrix close to singularity, making the inversion process highly sensitive to significant numerical errors. We present the average condition numbers for various covariance matrix estimators in Table \ref{cod}. The results show that the SAM strategy has high condition numbers that increase significantly as the number of assets in the portfolio rises, indicating poor numerical stability. In contrast, the ERSE strategy consistently yields the lowest condition numbers across all datasets, while minimizing both the mean and standard deviation of the condition number. These results highlight the superior robustness and numerical stability of the covariance matrix estimates produced by the ERSE method, reinforcing its potential as a reliable tool for risk management and portfolio optimization.
	
	\begin{table}[!htb]\small
		\centering
		\caption{Average Characteristics of Condition Numbers}\label{cod}
		\resizebox{\textwidth}{!}{
			\begin{tabular}{lllllllllll}
				\toprule
				& 30IN & 49IN & 100SB & 100SI & 100SO & MD150 & MD200 & MD300 & MD500 & MD650 \\
				\midrule
				\multicolumn{11}{c}{\textbf{Mean}}\\
				SAM   & 408.01 & 976.46 & 7.17E3 & 7.52E3 & 7.40E3 & NA & NA & NA & NA & NA \\
				LIN1P & 273.44 & 518.54 & 2.24E3 & 2.16E3 & 2.21E3 & 8.44E3 & 1.36E4 & 1.40E4 & 2.96E4 & 3.84E4 \\
				LIN2P & 259.73 & 451.09 & 2.61E3 & 2.57E3 & 2.55E3 & 9.88E3 & 3.35E4 & 1.61E4 & 4.62E4 & 5.72E4 \\
				LINC  & 224.59 & 444.98 & 1.15E3 & 1.15E3 & 1.16E3 & 4.12E3 & 1.19E4 & 5.44E3 & 2.14E4 & 2.67E4 \\
				LIND  & 288.01 & 595.59 & 2.34E3 & 2.22E3 & 2.30E3 & 1.21E4 & 2.66E4 & 1.80E4 & 6.73E4 & 8.63E4 \\
				LINM  & 315.61 & 577.22 & 2.44E3 & 2.19E3 & 2.31E3 & 1.95E4 & 4.65E4 & 1.34E4 & 8.39E4 & 1.33E5 \\
				GIS   & 277.47 & 526.24 & 1.81E3 & 1.39E3 & 1.40E3 & NA & NA & NA & NA & NA \\
				LIS   & 277.99 & 531.77 & 1.90E3 & 1.43E3 & 1.52E3 & NA & NA & NA & NA & NA \\
				QIS   & 276.97 & 520.88 & 1.75E3 & 1.35E3 & 1.38E3 & 9.90E4 & 6.16E5 & 1.22E4 & 1.88E5 & 1.45E8 \\
				ERSE  & \textbf{111.30} & \textbf{247.01} & \textbf{388.50} & \textbf{415.89} & \textbf{389.28} & \textbf{845.69} & \textbf{1614.08} & \textbf{1624.55} & \textbf{4428.55} & \textbf{5650.04} \\
				\midrule
				\multicolumn{11}{c}{\textbf{Standard Deviation}}\\
				SAM   & 182.56 & 609.29 & 8.62E3 & 1.21E4 & 1.30E4 & NA & NA & NA & NA & NA  \\
				LIN1P & 65.42 & 140.01 & 212.58 & 363.98 & 343.48 & 2.53E3 & 4.68E3 & 2.99E3 & 9.91E3 & 1.28E4 \\
				LIN2P & 63.94 & 127.90 & 570.11 & 693.86 & 739.34 & 1.64E3 & 7.79E3 & 1.25E3 & 8.95E3 & 1.13E4 \\
				LINC  & 30.85 & 95.31 & 126.83 & 133.09 & 104.30 & 921.53 & 2.35E3 & 989.43 & 4.89E3 & 6.15E3 \\
				LIND  & 73.37 & 203.06 & 309.36 & 500.65 & 407.32 & 3.03E3 & 6.94E3 & 1.58E3 & 1.81E4 & 2.39E4 \\
				LINM  & 105.31 & 180.86 & 555.57 & 673.33 & 651.79 & 3.81E3 & 1.90E4 & 3.31E3 & 3.51E4 & 3.14E4 \\
				GIS   & 76.66 & 186.60 & 141.05 & 255.71 & 326.69 & NA & NA & NA & NA & NA  \\
				LIS   & 77.46 & 193.16 & 236.42 & 347.78 & 1.08E3 & NA & NA & NA & NA & NA  \\
				QIS   & 75.90 & 180.36 & 135.69 & 195.34 & 181.12 & 4.78E5 & 3.19E5 & 5.00E4 & 4.29E5 & 3.36E9 \\
				ERSE  & \textbf{14.77} & \textbf{53.99} & \textbf{42.05} & \textbf{62.17} & \textbf{43.72} & \textbf{34.36} & \textbf{98.41} & \textbf{181.13} & \textbf{253.80} & \textbf{308.39 }\\
				\bottomrule
		\end{tabular}}
	\end{table}
	
	An alternative approach to assessing the stability of covariance matrix estimators is to analyze the portfolio weights of the GMV strategies, as ill-conditioned covariance matrix estimators are often linked to extreme positions in individual assets, leading to undesirable portfolio allocations. Figure \ref{difw} displays the distribution of portfolio weights for the strategies across different datasets. As shown in the figure, the ERSE strategy consistently produces more concentrated and stable weight distributions, with the absolute value of almost all portfolio weights remaining below 0.5. This indicates that the ERSE approach generates less volatile and extreme portfolio weights. The ERSE strategy maintains relatively narrow weight distributions in datasets with more assets, highlighting its robustness and reduced dependence on sample size. In contrast, strategies like SAM and GIS exhibit broader, more extreme portfolio weights, particularly as the number of assets increases, suggesting that these methods are more susceptible to instability in covariance matrix estimates. Additionally, the ERSE strategy leads to more balanced investment allocations across assets, reinforcing its diversification advantage. These results align with the findings of \cite{DeMiguel2009}, which uses the $\ell_2$-norm constraint to emphasize diversified portfolios. A surprising finding is that the ERSE strategy has a shorter left tail in its portfolio weight distribution compared to other strategies. This suggests that substantial short-selling of individual assets is undesirable in the context of positively correlated assets. Excessive short-selling implies high leverage, which inherently increases risk. While the ERSE strategy does incorporate a moderate level of small-scale short-selling, this is consistent with the intuition of using hedging investments in positively correlated assets to mitigate risk.
	\begin{figure}[!htb]
		\centering
		\includegraphics[width=1\linewidth]{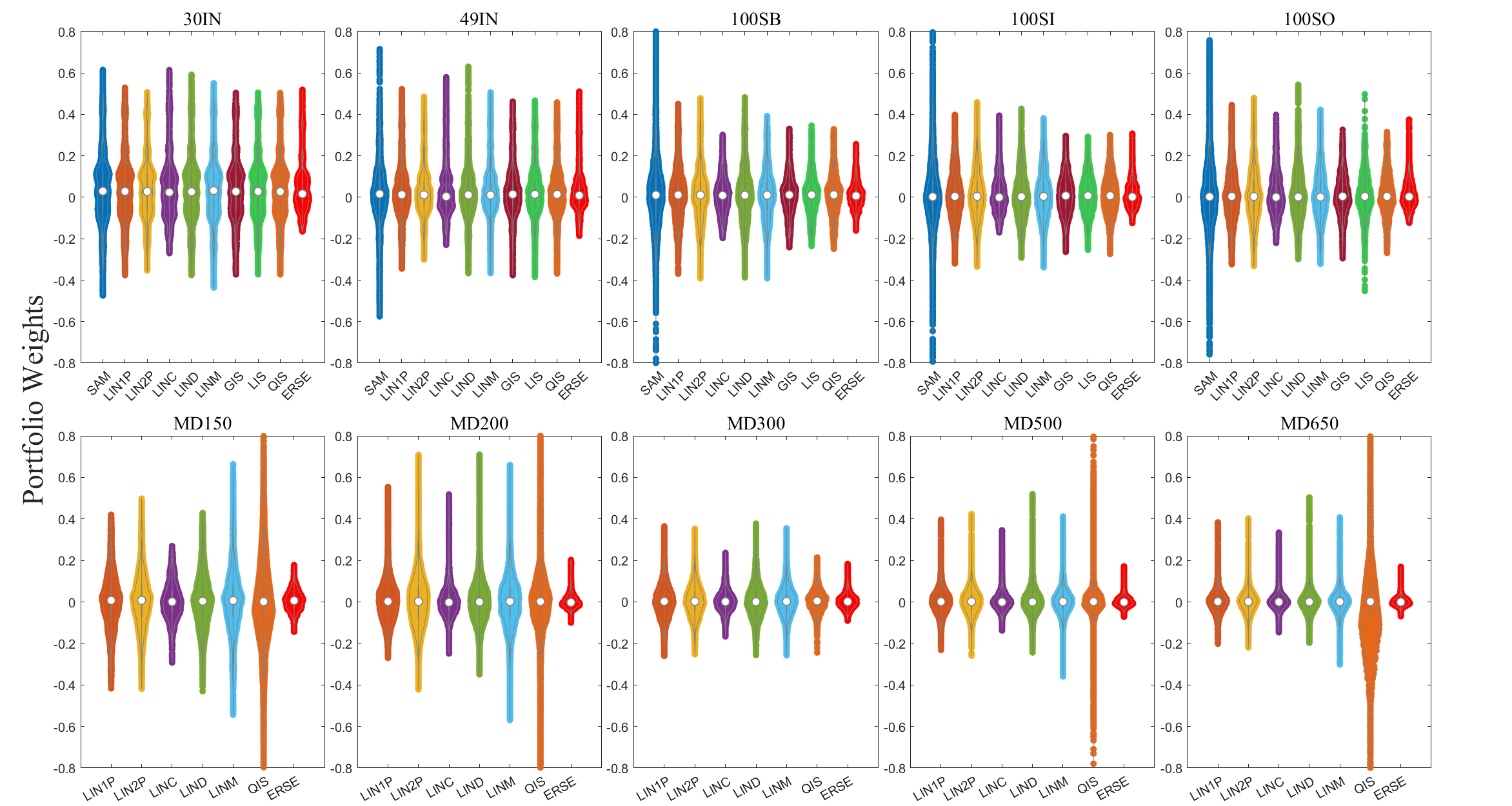}
		\caption{Distribution of Portfolio Weights}
		\label{difw}
	\end{figure}
	
	\subsection{Sharpe Ratio}
	The tradeoff between risk and return remains a central consideration in portfolio management. The Sharpe ratio, calculated as the ratio of mean excess returns to their standard deviation, provides insight into how effectively each strategy compensates for risk and is a key tool for evaluating portfolio performance. To test the statistical significance of the differences in Sharpe ratios between the ERSE strategy and alternative portfolios, we use the bootstrap method proposed by \cite{Ledoit2008}, with bootstrap samples set to $B=1,000$ and a block size of $b=5$. Table \ref{sr} reports the monthly out-of-sample Sharpe ratios of different strategies across various datasets.
	\begin{table}[!htb]\small
		\centering
		\caption{Out-of-Sample Sharpe ratio}\label{sr}
		\resizebox{\textwidth}{!}{
			\begin{tabular}{lllllllllll}
				\toprule
				& 30IN & 49IN & 100SB & 100SI & 100SO & 150MD & 200MD & 300MD & 500MD & 650MD \\
				\midrule
				EW & \textbf{0.3763} & \textbf{0.3508} & 0.2963$^{***}$ & 0.3033$^{***}$ & 0.2850$^{***}$ & 0.1494$^{***}$ & \textbf{0.1525}$^{***}$ & 0.1332$^{***}$ & \textbf{0.1527}$^{***}$ & 0.1557$^{***}$ \\
				SAM & 0.3330 & 0.2590$^{***}$ & 0.3283 & 0.3566 & 0.2968$^{*}$ & NA & NA & NA & NA & NA \\
				LIN1P & 0.3357 & 0.2793$^{***}$ & 0.3333 & 0.3511 & 0.3073 & 0.1955 & 0.1113$^{***}$ & 0.1516$^{**}$ & 0.1258$^{***}$ & 0.1477$^{***}$ \\
				LIN2P & 0.3360 & 0.2856$^{***}$ & 0.3324 & 0.3527 & 0.3058 & 0.1975 & 0.1120 & 0.1509$^{**}$ & 0.1244$^{***}$ & 0.1484$^{***}$ \\
				LINC & 0.3310 & 0.2877$^{**}$ & 0.3303 & 0.3464 & \textbf{0.3141} & 0.1871 & 0.1295$^{*}$ & \textbf{0.1576} & 0.1426$^{**}$ & 0.1560$^{*}$ \\
				LIND & 0.3346 & 0.2744$^{***}$ & 0.3323 & 0.3462 & 0.3073 & 0.1949 & 0.1195$^{***}$ & 0.1508$^{**}$ & 0.1334$^{***}$ & 0.1550$^{**}$ \\
				LINM & 0.3328 & 0.2823$^{**}$ & 0.3315 & 0.3499 & 0.3023 & 0.1995 & 0.1156$^{***}$ & 0.1513$^{**}$ & 0.1259$^{***}$ & 0.1443$^{***}$ \\
				GIS & 0.3355 & 0.2843$^{**}$ & \textbf{0.3364} & 0.3517 & 0.3087 & NA & NA & NA & NA & NA \\
				LIS & 0.3353 & 0.2840$^{**}$ & 0.3352 & 0.3520 & 0.3076 & NA & NA & NA & NA & NA \\
				QIS & 0.3356 & 0.2846$^{**}$ & 0.3371 & 0.3516 & 0.3108 & \textbf{0.2027} & 0.1115$^{***}$ & 0.1558 & 0.1338$^{***}$ & 0.1512$^{***}$ \\
				ERSE & 0.3446 & 0.3021 & 0.3177 & \textbf{0.3391} & 0.2946 & 0.1729 & 0.1353 & 0.1518 & 0.1470 & \textbf{0.1573} \\
				\bottomrule
		\end{tabular}}
	\end{table}
	
	As shown in Table \ref{sr}, the EW strategy outperforms the other strategies in four of the ten datasets, consistent with the findings in \cite{DeMiguel2009a}. Additionally, the ERSE strategy performs exceptionally well in the 100SI and 650MD datasets, achieving the highest Sharpe ratios in these cases while delivering competitive performance in others. In most datasets, we observe no significant differences in the Sharpe ratios between the ERSE strategy and other competing methods. This result aligns closely with \cite{Goto2015} and \cite{Shi2020}, which note that detecting statistical differences in Sharpe ratios across alternative portfolio strategies is challenging due to substantial estimation errors in expected returns. Furthermore, the ERSE strategy achieves higher Sharpe ratios than the other strategies in the 650MD dataset, with most comparisons being statistically significant. This superior performance can be attributed to the better covariance matrix estimation provided by the ERSE method.
	
	\subsection{Subperiod Analysis}
	The out-of-sample period spans $T-L$ months, and it is important to consider whether specific subperiods drive the superior performance of the ERSE strategy. To address this concern, we divide the out-of-sample period into two subperiods and repeat the experimental analysis for each one. The results are presented in Table \ref{subp}. Table \ref{subp} shows that the ERSE strategy outperforms its competitors in terms of out-of-sample variance over both subperiods for all ten datasets. Statistically significant differences in variance are particularly evident in the high-dimensional datasets, further reinforcing the strategy's effectiveness in managing risk. The results from the subperiod analysis are consistent with those from the full-period analysis in Subsection \ref{res_var}, highlighting the robustness of the ERSE strategy across different time intervals. 
	\begin{table}[!htb]\small
		\centering
		\caption{Out-of-Sample Variances for Subperiod Analysis}\label{subp}
		\resizebox{\textwidth}{!}{
			\begin{tabular}{lllllllllll}
				\toprule
				& 30IN & 49IN & 100SB & 100SI & 100SO & 150MD & 200MD & 300MD & 500MD & 650MD \\
				\midrule
				\multicolumn{11}{c}{\textbf{Subperiod 1}} \\
				EW & 20.82$^{***}$ & 22.01$^{***}$ & 23.92$^{***}$ & 24.06$^{***}$ & 24.47$^{***}$ & 21.30$^{***}$ & 23.81$^{***}$ & 24.11$^{***}$ & 23.95$^{***}$ & 23.24$^{***}$ \\
				SAM & 13.35$^{*}$ & 14.19$^{***}$ & 21.31$^{***}$ & 17.99$^{***}$ & 21.75$^{***}$ & NA & NA & NA & NA & NA \\
				LIN1P & 12.54 & 12.34$^{**}$ & 16.00$^{***}$ & 12.84$^{***}$ & 14.86$^{**}$ & 11.79 & 7.33 & 14.34$^{***}$ & 8.76$^{**}$ & 9.02$^{**}$ \\
				LIN2P & 12.40 & 12.06$^{**}$ & 16.53$^{***}$ & 13.27$^{***}$ & 15.33$^{**}$ & 12.22 & 8.23 & 15.21$^{***}$ & 9.62$^{**}$ & 9.74$^{**}$ \\
				LINC & 12.56 & 12.09 & 14.77 & 13.05$^{*}$ & 14.40$^{**}$ & 10.93 & 7.56 & 11.35$^{**}$ & 7.39 & 7.40 \\
				LIND & 12.59 & 12.61$^{*}$ & 16.24$^{***}$ & 12.81$^{***}$ & 15.10$^{**}$ & 11.99 & 7.32 & 14.68$^{***}$ & 8.63$^{**}$ & 8.81$^{*}$ \\
				LINM & 12.51 & 11.88 & 16.69$^{***}$ & 13.24$^{***}$ & 15.17$^{**}$ & 12.85 & 8.27$^{**}$ & 14.49$^{***}$ & 8.38$^{**}$ & 8.60$^{**}$ \\
				GIS & 12.58$^{**}$ & 11.99$^{**}$ & 13.92 & 11.41$^{**}$ & 12.76 & NA & NA & NA & NA & NA \\
				LIS & 12.58$^{**}$ & 12.01$^{**}$ & 13.84 & 11.30$^{**}$ & 12.75 & NA & NA & NA & NA & NA \\
				QIS & 12.58$^{**}$ & 11.96$^{**}$ & 14.04 & 11.55$^{***}$ & 12.87 & 13.01$^{***}$ & 8.17$^{***}$ & 10.27$^{**}$ & 7.40$^{**}$ & 7.64$^{**}$ \\
				ERSE & \textbf{11.56} & \textbf{11.17} & \textbf{13.07} & \textbf{9.68} & \textbf{11.93} & \textbf{9.76} & \textbf{6.67} & \textbf{9.52} & \textbf{6.78} & \textbf{6.72} \\
				\midrule
				\multicolumn{11}{c}{\textbf{Subperiod 2}} \\
				EW & 24.29$^{***}$ & 24.05$^{***}$ & 30.37$^{***}$ & 28.74$^{***}$ & 28.91$^{***}$ & 25.65$^{***}$ & 29.66$^{***}$ & 29.30$^{***}$ & 29.42$^{***}$ & 28.48$^{***}$ \\
				SAM & 13.52 & 13.86$^{**}$ & 16.01$^{*}$ & 17.34$^{*}$ & 15.84$^{**}$ & NA & NA & NA & NA & NA \\
				LIN1P & 13.29 & 13.48$^{**}$ & 15.26 & 16.56$^{*}$ & 14.88$^{**}$ & 14.92 & 10.20 & 19.44$^{***}$ & 11.26$^{***}$ & 11.54$^{***}$ \\
				LIN2P & 13.17 & 13.21 & 15.21 & 16.61$^{*}$ & 14.73$^{**}$ & 14.92 & 11.71 & 19.23$^{***}$ & 12.03$^{***}$ & 12.18$^{***}$ \\
				LINC & 12.89 & 13.26 & 14.15 & 15.27 & 13.91 & 14.22 & 8.19 & 14.94 & 8.49$^{***}$ & 8.73$^{***}$ \\
				LIND & 13.33 & 13.55$^{**}$ & 15.16 & 16.50$^{*}$ & 14.95$^{**}$ & 14.94 & 10.28 & 19.36$^{***}$ & 11.19$^{***}$ & 11.61$^{***}$ \\
				LINM & 13.21 & 13.15 & 14.91 & 15.99$^{*}$ & 14.42 & 14.85 & 10.24 & 16.90$^{**}$ & 9.71$^{***}$ & 10.33$^{***}$ \\
				GIS & 13.32 & 13.27$^{*}$ & 14.44 & 15.54 & 13.88 & NA & NA & NA & NA & NA \\
				LIS & 13.32 & 13.25 & 14.32 & 15.46 & 13.84 & NA & NA & NA & NA & NA \\
				QIS & 13.33 & 13.28 & 14.57 & 15.62 & 13.93 & 15.09 & 9.69 & 14.62 & 9.03$^{*}$ & 10.33$^{**}$ \\
				ERSE & \textbf{12.85} & \textbf{12.61} & \textbf{13.48} & \textbf{14.59} & \textbf{13.15} & \textbf{13.96} & \textbf{7.18} & \textbf{12.36} & \textbf{7.05} & \textbf{7.24} \\
				\bottomrule
		\end{tabular}}
	\end{table}
	
	\subsection{Different Estimation Windows}
	In the above analysis, the covariance matrix estimators are constructed using the most recent \( L = 120 \) monthly returns. To assess the robustness of our results, we conduct additional tests using alternative estimation windows, allowing \( L \) to vary between 60 and 240, representing 5 to 20 years of historical data. To ensure comparability across different estimation windows, we standardize the starting investment period at the 241st data point. Figure \ref{difL} presents the results.
	\begin{figure}[htb]
		\centering
		\includegraphics[width=\textwidth]{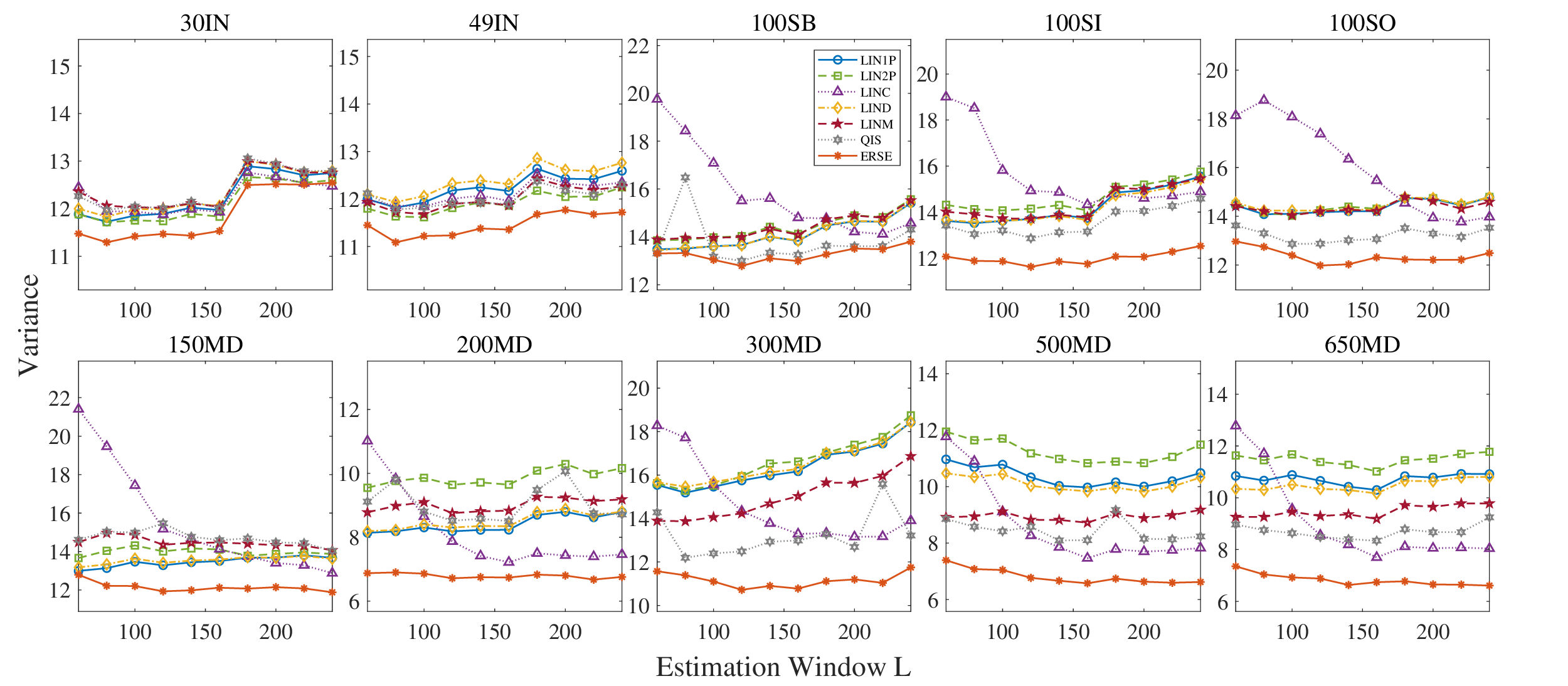}
		\caption{Portfolio risks with Different Estimation Windows of $L$}
		\label{difL}
	\end{figure}
	
	The results show that the ERSE strategy consistently outperforms the competing strategies in terms of portfolio risk for \( L \in [60, 240] \). In contrast, the performance of the other strategies exhibits greater variability as the estimation window \( L \) changes. Specifically, the LINC strategy requires a relatively larger estimation window when applied to portfolios with higher asset dimensions; otherwise, its out-of-sample performance deteriorates significantly. As discussed in Subsection \ref{res_var}, the LINC strategy performs well with positively correlated assets, often ranking just behind the ERSE strategy across multiple datasets. However, one of the key advantages of the ERSE strategy, as highlighted in Figure \ref{difL}, is its robustness to changes in the estimation window. The ERSE strategy shows less sensitivity to variations in sample size, providing stable and reliable covariance matrix estimators across different estimation windows.
	
	\subsection{Randomized Samples of Assets}
	Given that the 650MD dataset encompasses the majority of assets in our analysis, we use it as the sampling universe for robustness testing. We conduct 150 iterations of random subsampling, drawing 200 assets in each iteration to generate distinct datasets for stability and out‐of‐sample performance assessment. Comparative experiments of different strategies are then conducted on these 150 datasets. This approach allows us to evaluate the performance of the ERSE strategy across a wide range of asset combinations, providing insight into its robustness when applied to different subsets of available assets. Additionally, we calculated the win rate of the ERSE strategy over the alternative strategies, defined as the number of times the ERSE strategy outperformed its counterparts. We also calculated the average deviation between the ERSE strategy and each alternative strategy over the 150 tests and statistically tested the hypothesis that this deviation is zero. Table \ref{ran_a} presents the results, which include the mean, standard deviation, maximum, and minimum values of the out-of-sample variances across the 150 tests. As shown in Table \ref{ran_a}, the ERSE strategy has the lowest mean, standard deviation, maximum, and out-of-sample risks across 150 tests. Additionally, the ERSE strategy achieved a 100\% win rate against the alternative strategies over the 150 tests. In summary, the ERSE strategy demonstrates superior and robust performance across a wide range of randomly sampled datasets, making it a highly effective approach for portfolio optimization in the context of positively correlated assets.
	\begin{table}[!htb]\small
		\centering
		\caption{Summary Statistic of the Out-of-Sample Variance for 150 Randomized Samples}\label{ran_a}
		\resizebox{\textwidth}{!}{
			\begin{tabular}{lllllllll}
				\toprule
				& EW & LIN1P & LIN2P & LINC & LIND & LINM & QIS & ERSE \\
				\midrule
				MEAN & 25.8868 & 9.8741 & 10.4614 & 8.8552 & 9.9255 & 9.9334 & 9.8867 & \textbf{8.0620 }\\
				STD  & 0.5348 & 0.8371 & 0.9037 & 0.7464 & 0.8779 & 0.8859 & 2.5388 & \textbf{0.5224} \\
				MAX  & 27.6446 & 12.2836 & 12.9950 & 11.2524 & 12.2697 & 12.3987 & 35.1709 & \textbf{9.7579} \\
				MIN  & 24.5760 & 8.1918 & 8.6419 & 7.2846 & 8.2038 & 8.2637 & 7.6268 & \textbf{7.0075} \\
				WR   & 100\% & 100\% & 100\% & 100\% & 100\% & 100\% & 100\% & NA \\
				MD   & 17.8248$^{***}$ & 1.8121$^{***}$ & 2.3994$^{***}$ & 0.7932$^{***}$ & 1.8635$^{***}$ & 1.8714$^{***}$ & 1.8247$^{***}$ & NA \\
				\bottomrule
		\end{tabular}}
	\end{table}
	
	
	\section{Conclusion}
	\label{conclu}
	
	This paper addresses the problem of estimating the covariance matrix for positively correlated assets. Empirical evidence from Ken-French factor-sorted portfolios confirms the presence of strong positive correlations and motivates our research. The proposed ERSE estimator enforces a minimum deviation from the uniform vector and uses a PER procedure to adjust paired eigenvectors while preserving orthogonality. Theoretically, we show that the leading eigenvector departs more from the null space than the others, and that modestly amplifying weaker factors improves the covariance matrix estimation. The PER steps can be viewed as sequential linear shrinkage on eigenvalues, boosting smaller values and contracting larger ones. We also demonstrate that ERSE remains computationally feasible even in high dimensions. Out-of-sample tests across multiple Ken-French datasets reveal that ERSE yields lower portfolio variance than benchmark estimators. Further checks show that ERSE produces covariance matrices with smaller condition numbers, more concentrated and stable portfolio weights, and consistent gains in every subperiod and under different estimation windows.
	
	There are some potential directions for future research. First, extend ERSE to mixed-correlation settings. Second, relax the multivariate normality assumption by deriving shrinkage formulas for nonparametric return distributions. Third, leverage machine-learning techniques on high-dimensional financial data, such as macroeconomic indicators, factor returns, order-book metrics, to learn adaptive shrinkage coefficients and improve ERSE’s responsiveness to changing market conditions.

	\section*{Declaration of Interest}
	We declare that we have no financial and personal relationships with other people or organizations that can inappropriately influence our work, there is no professional or other personal interest of any nature or kind in any product, service and/or company that could be construed as influencing the position presented in, or the review of, the manuscript entitled.
	
	\bibliography{mybib.bib}
		
	\end{document}